\DeclareMathOperator{\sinc}{sinc}
\newtheorem{thm}{Theorem}[section]
\newtheorem{cor}[thm]{Corollary}
\newtheorem{prop}[thm]{Proposition}
\newtheorem{rmk}[thm]{Remark}
\newcommand{\brck}[1]{\left(#1\right)}
\newcommand{\brcksqr}[1]{\left[#1\right]}
\newcommand{\set}[1]{\left\{#1\right\}}
\newcommand{\Real}{\mathbb R}
\newcommand{\prob}{\mathbb{P}}
\newcommand{\E}{\mathbb{E}}
\newcommand{\LT}{\mathcal{L}}
\newcommand{\OperatorA}{\mathcal{A}}
\newcommand{\OperatorG}{\mathcal{G}}
\newcommand{\ILT}{\mathcal{L}^{-1}}
\newcommand{\FPTBM}{\frac{x}{\sqrt{2\pi}}t^{-\frac{3}{2}}e^{-\frac{x^2}{2t}}}
\newenvironment{proof}{\paragraph{Proof:}}{\hfill$\square$}
\title{An Economic Bubble Model and Its First Passage Time}
\author{Angelos Dassios\thanks{A.Dassios@lse.ac.uk} }
\author{Luting Li\thanks{L.Li27@lse.ac.uk}}
\affil{Department of Statistics, London School of Economics}
\begin{document}
\date{\today}
\maketitle

\begin{abstract}
	We introduce a new diffusion process $\set{X_t}_{t\ge 0}$ to describe asset prices within an economic bubble cycle. The main feature of the process, which differs from existing models, is the drift term where a mean-reversion is taken based on an exponential decay of the scaled price. Our study shows the scaling factor on $\set{X_t}_{t\ge 0}$ is crucial for modelling economic bubbles as it mitigates the dependence structure between the price and parameters in the model. We prove both the process and its first passage time are well-defined. An efficient calibration scheme, together with the probability density function for the process are given. Moreover, by employing the perturbation technique, we deduce the closed-form density for the downward first passage time, which therefore can be used in estimating the burst time of an economic bubble. The object of this study is to understand the asset price dynamics when a financial bubble is believed to form, and correspondingly provide estimates to the bubble's crash time. Calibration examples on the US dot-com bubble and the 2007 Chinese stock market crash verify the effectiveness of the model itself. The example on BitCoin prediction confirms that we can provide meaningful estimate on the downward probability for asset prices.
	
	\smallskip

\noindent {\bf Keywords}: Economic Bubbles, Diffusion Process, First Passage Time, Perturbation, Cryptocurrency

\end{abstract}

\section{Introduction}
An \emph{economic bubble} usually refers to economic phenomenons that asset prices extremely deviate from their fundamental values \cite{shiller2000irrational}. One of the most famous bubbles in history, known as the Dutch Tulip Bubble \cite{dash2011tulipomania, garber2001famous}, could be traced back to the 1630s. According to P.M. Garber \cite{garber2001famous}, from November 1636 to February 1637, the prices of tulip bulbs had increased about 20 times. At the peak of the bubble, by selling a few bulbs people could even buy a luxury house in Amsterdam. However, only three months later, the bulbs became worthless. The rapid increases and sudden drops in asset prices are a common feature reflected by a bubble cycle. More modern examples can be found in \cite{wood1992bubble, john2003dot, holt2009summary}.

The \emph{burst} of an economic bubble sometimes follows with financial crisis, or even economic depression. In modern history, the most devastating crisis would be the 2007-2009 Financial Crisis \cite{reinhart20082007}, where people believe the crash of the US real estate market is one of the causing. And the crash itself, is usually referred to as the burst of the US Housing Bubble \cite{holt2009summary}. Although it is believed that a bubble cannot be predicted before it is formed, by knowing the burst time in advance, governments and market participants can manage the potential risk accordingly. Therefore, an effective estimate before the crash will help in preventing systematic risk. The object of this study is to understand the asset price dynamics within an economic bubble cycle and provide estimates to the probability distribution of the collapse time.

The financial bubbles have been studied extensively in econometrics and statistics. As a non-conclusive review, we refer to \cite{phillips2015testing} and the literatures it mentioned for the econometric approach; agent-based models in statistics and a summary of literatures can be found in \cite{filimonov2017modified}. In financial mathematics, local martingale models have been considered in option pricing problems. A. Cox and D.G. Hobson \cite{cox2005local} included a wide branch of stochastic diffusions in their work. 
S. Heston et al. \cite{heston2006options} enriched the discussions by introducing CIR process and Heston stochastic volatility model. In terms of the burst time prediction, C. Brooks and A. Katsaris \cite{brooks2005trading, brooks2005three} forecasted the collapse of speculative bubbles in S\&P 500 index using a three-regime model. To the best of our knowledge, there is limited research in modelling economic bubble dynamics via a pure time-homogeneous diffusion process. The research on finding the explicit probability density of bubble crash time is even less. One paper related to our work is contributed by A. Kiselev and L. Ryzhik \cite{kiselev2010simple}, where a mean-reversion process with an exogenous functional drift has been considered.   

In this paper we introduce a new time-homogeneous diffusion model. Our motivation is to provide an alternative approach, where with the mathematical form to be as simple and tractable as possible, to probabilistically describe asset price dynamics within an economic bubble cycle. The new model is closely linked to the Shiryaev process \cite{shiryaev1961problem, shiryaev2002quickest} derived by A.N. Shiryaev in the context of sequential analysis. Our model involves three independent parameters, where two of them provide mean-reversion effects as in the Ornstein-Uhlenbeck (OU) process. As a crucial variable to our model, the third parameter controls the speed of exponential decay in the drift term. Consequently, the dependence structure among the return, asset price, equilibrium level and the mean-reversion rate has been mitigated. Without introducing extra functionals, our model provides sufficient degree of freedom for calibrations, while on the other hand, avoids over-fitting. Due to the simple structure of the model, we are able to show the closed-form density function of the bubble crash time.

The main contribution in the present paper is that we have provided a self-contained material in modelling bubble dynamic and predicting its burst time. On the theoretical side, we have proved the new model is a well-defined diffusion process and its first passage time (FPT) exists. To be more specific, the process is a semimartingale with a strong and unique solution. As a recurrent strong Markov process, the model embeds an a.s. finite FPT; and its stationary distribution has been found with a neat functional form. On the practical side, a calibration algorithm based on economic features has been considered. We have given explicit solution to the distribution of the process at fixed time. Moreover, the Laplace transform (LT) of the FPT has been found, and based on the perturbation technique we have solved the closed-form density for the downward FPT. In the end, the effectiveness of the model and its FPT density (FPTD) has been verified by three numerical examples.

The rest of the paper is organised as follows: Section 2 introduces the SDE of our new model and the motivation behind it; Section 3 discusses the theoretical results from the new process itself; the closed-form solution of the FPTD is given by Section 4; in Section 5 we demonstrate the calibration algorithm and illustrate the model application via three examples, among which a prediction on the BitCoin collapse time has been given; Section 6 concludes.

\section{Stochastic Dynamic and Motivation}\label{sec2}
Consider a filtered probability space $\set{\Omega, \mathcal{F},\mathbb{P}}$, where $\mathcal{F}=\set{\mathcal{F}_t}_{t\ge 0}$ is a natural filtration generated by a standard Brownian motion $\set{W_t}_{t\ge 0}$. We introduce the following three-parameter SDE
\begin{equation}
\label{eqnexp}
dX_t=\epsilon \left(e^{-2\alpha X_t}-c\right)dt+dW_t,\ X_0=x\in\Real.
\end{equation}
The parameters $\epsilon,\ \alpha$ are restricted on the positive real line and $0\le c\le 1$. 

The dynamic describes a process with exponentially decayed mean-reversion drift. As the most important parameter in our new model, $\alpha$ controls the speed, curvature, and higher order information in the drift term. Figure \ref{fig1} illustrates the functionals of $e^{-2\alpha X_t}$ to different choices of $\alpha$. We can see, as a function of $X_t$, small $\alpha$ produces mildly linear decays in the drift. This extends the range of the process where positive return is maintained. On the other hand, large $\alpha$ generates evident exponential decays. In this case the drift sign is sensitive to the values of $X_t$, and the range of positive drift is compressed.
  
 \begin{figure}[h]
 \centering
 \begin{minipage}[c]{0.48\textwidth}
 \centering
        \includegraphics[width=1.\textwidth, height = .8\textwidth]{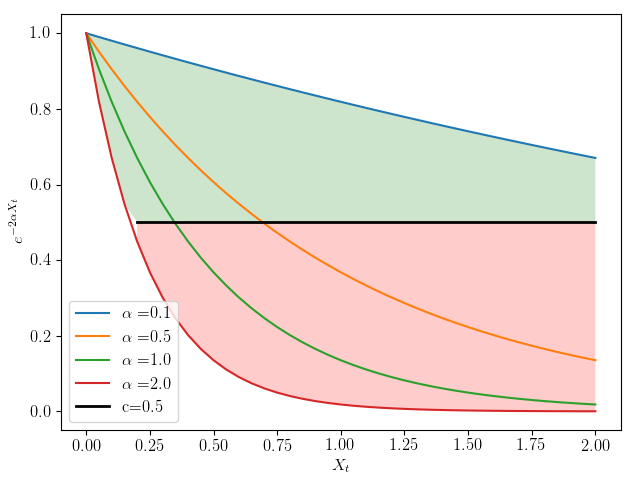}
        \caption{Functoin plots of $e^{-2\alpha X_t}$ with $\alpha=0.1,0.5,1,2$. Green zone: positive drift; red zone: negative drift.}
        \label{fig1}
 \end{minipage}
\hspace{1ex}
\begin{minipage}[c]{0.48\textwidth}
\centering
        \includegraphics[width=1.\textwidth, height = 0.8\textwidth]{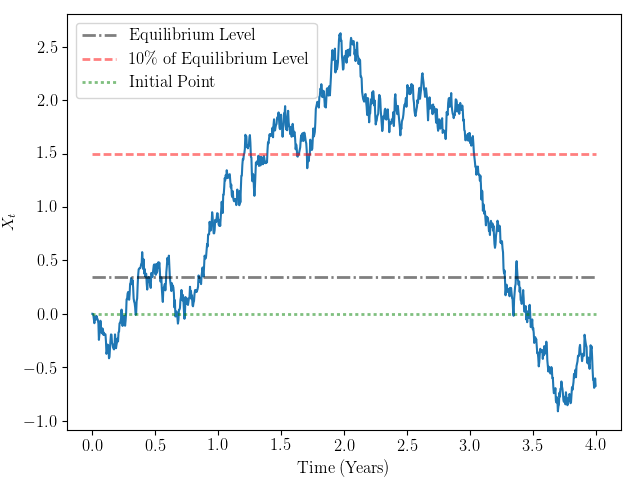}
        \caption{Sample path for $X_t$ in 4 years time. Parameters are chosen as $\alpha=1$, $\epsilon=0.1$, $c=0.5$, $X_0=0$ and $dt=\frac{1}{250}$. }
        \label{fig2}
 \end{minipage}
 \end{figure}

To illustrate the new process in a more intuitive way, we plot the simulated path of $X_t$ with $\alpha=1$ (green curve in Figure \ref{fig1}) in 4 years time. The other parameters are chosen as $\epsilon=0.1$, $c=0.5$ and $X_0=0$. Three thresholds in colors of (from below to above) green, black and red indicate different regimes for the process: I) when $X_t$ is negative or near $0$ (green line), according to SDE \eqref{eqnexp} the process embeds a strong positive trend; II) black line plots the equilibrium level where $e^{-2\alpha X_t}=c$ and in a long-run $X_t$ oscillates around this position; III) red line shows the level of $X_t$ where $e^{-2\alpha X_t}=0.1c$ and the process is forced to drop back due to the strong negative trend. In addition, the sample path shows $X_t$ spends much less time in visiting the equilibrium level from the initial point, than that it drops back from the symmetrical high position. Consider the green curve in Figure \ref{fig1}, this asymmetric feature is a natural reflection to the exponential transforms from the level $X_t$ to the instantaneous return. As a result, $X_t$ in general should have a rapid increase when it is below the equilibrium level, but, even though $X_t$ exceeds the equilibrium level to higher positions, it is not necessary that $X_t$ will drop down immediately. This behaviour essentially differentiates our new model and the OU type mean-reversion processes.

The model feature coincides with observations from economic bubbles. Refer to the theory by H.P. Minsky and H. Kaufman \cite{minsky2008stabilizing}. A bubble cycle is formed by five steps: \emph{Displacement, Boom, Euphoria, Profit Taking, Panic}. In the first step the asset price remains at a lower level and the process usually has an `initiative' to increase. This corresponds to the regime I in our model. During the booming stage, the price becomes sensitive to positive market news and increases rapidly. Although sometimes due to divergence in market anticipations that the price may drop down, after oscillations the asset price will keep increasing. This is described by regime II. In regime III, the peak is shown and large negative drift is accumulated. This describes the euphoria stage where the asset price hits historically high levels; however, due to market capital limits or aversions of risk, the market expectations become negative. In the profit taking stage, asset price becomes sensitive to negative market news and the process shifts from regime III to regime II. In the end the process drops back to the mean-reversion level, or even continue to drop to regime I. This describes the last step of the bubble.

From the calibration point of view, $\alpha$ mitigates the dependence structure of the instantaneous return to the price, equilibrium level and mean-reversion rate. Consider the drift function where $\alpha$ is suppressed, 
\[
\mu(X_t):=\epsilon(e^{-X_t}-c).
\]
In this model once $c$ is determined, the equilibrium level $X_t=-\ln(c)$ becomes a fixed number. If a large rate of $c$ is calibrated, then we simultaneously have a small equilibrium level $-\ln(c)$. Therefore when $X_t$ is small, where $e^{-X_t}-c$ is close to $0$, in order to fit a large instantaneous return\footnote{Otherwise the process will take a long period to visit regime III), where $e^{-X_t}\approx 0$ and $-c$ becomes dominating.} the mean-reversion rate $\epsilon$ should be adjusted highly as well. But we know usually a bubble spends years to finish its whole cycle; so large reversion rate is not desired for a bubble model. 

The analysis shows the $\alpha$-suppressed model is not capable for calibrating a bubble dynamic. As a complement, extra functional term is required (cf. \cite{kiselev2010simple}). However, without introducing extra functions our three-parameter model extends the freedom in model calibration. Combining previous discussions we see SDE \eqref{eqnexp} is a good candidate for describing economic bubbles. 



\section{Theoretical Soundness}
\subsection{Existence, Uniqueness and the Strong Markov Property}
\begin{prop}\label{prop41}
There exists a unique and strong solution $\set{X_t}_{t\ge 0}$ to SDE \eqref{eqnexp} and which has the following explicit form
$$
X_t=x+W_t-c\epsilon t+\frac{1}{2\alpha}\ln\left(1+2\epsilon \alpha e^{-2\alpha x}\int_0^t e^{-2\alpha \left(W_s-c\epsilon s\right)}ds\right);
$$
moreover $\set{X_t}_{t\ge 0}$ is a strong Markov process.
\end{prop}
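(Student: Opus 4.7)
The plan is to derive the explicit formula first via a pathwise transformation, then obtain uniqueness by observing that the transformation is reversible, and finally invoke standard SDE theory for the strong Markov property. The main obstacle is that the drift $b(y)=\epsilon(e^{-2\alpha y}-c)$ is only locally Lipschitz (exponentially growing as $y\to-\infty$), so the classical global-Lipschitz existence theorem does not apply directly; the transformation bypasses this by reducing the SDE to a pathwise linear ODE.

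First I would set $Y_t=X_t-x-W_t+c\epsilon t$, isolating the continuous finite-variation component of any candidate solution. For almost every Brownian path, $Y_t$ is absolutely continuous with $Y_t'=\epsilon e^{-2\alpha X_t}=\epsilon e^{-2\alpha(x+W_t-c\epsilon t)}e^{-2\alpha Y_t}$. The substitution $Z_t=e^{2\alpha Y_t}$ linearises this pathwise ODE: $Z_t'=2\alpha\epsilon e^{-2\alpha(x+W_t-c\epsilon t)}$, $Z_0=1$, which integrates in one step to
$$
Z_t=1+2\alpha\epsilon e^{-2\alpha x}\int_0^t e^{-2\alpha(W_s-c\epsilon s)}ds.
$$
Unwinding $X_t=x+W_t-c\epsilon t+(2\alpha)^{-1}\ln Z_t$ gives the stated closed form.

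Second, I would verify existence by taking $X_t$ to be the explicit formula and applying It\^o's formula to it. Because $Z_t$ has continuous paths of finite variation, the chain rule gives $d[(2\alpha)^{-1}\ln Z_t]=\epsilon Z_t^{-1}e^{-2\alpha(x+W_t-c\epsilon t)}dt=\epsilon e^{-2\alpha X_t}dt$, and hence $dX_t=dW_t-c\epsilon\,dt+\epsilon e^{-2\alpha X_t}dt$, as required. Non-explosion is immediate: on any finite interval $[0,t]$ the path $s\mapsto W_s-c\epsilon s$ is bounded a.s., so $\int_0^t e^{-2\alpha(W_s-c\epsilon s)}ds<\infty$ a.s., making the formula well-defined for all $t\ge 0$.

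Third, for uniqueness, I would note that the derivation is reversible: if $\tilde X$ is any strong solution on the same filtered space driven by the same $W$, then the analogous $\tilde Y_t,\tilde Z_t$ satisfy, by It\^o, the same pathwise ODE $\tilde Z_t'=2\alpha\epsilon e^{-2\alpha(x+W_t-c\epsilon t)}$ with $\tilde Z_0=1$. This ODE has a unique solution path by path, forcing $\tilde X$ to agree with the explicit formula. Finally, since the coefficients of \eqref{eqnexp} are time-homogeneous and pathwise uniqueness of strong solutions has just been established, the resulting family $\set{X_t^x}_{t\ge 0}$ is a strong Markov process by the standard Yamada--Watanabe/flow argument (see e.g.\ Revuz--Yor IX or Karatzas--Shreve Theorem 5.4.20), which concludes the proof.
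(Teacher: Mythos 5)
Your proof is correct, but it takes a genuinely different route from the paper's. The paper transforms the unknown itself, $Y_t=e^{2\alpha X_t}$, so that \eqref{eqnexp} becomes the linear SDE \eqref{YtSDE} with globally Lipschitz, linearly growing coefficients; existence, uniqueness and the closed form then come from quoted results (Karatzas--Shreve for well-posedness, a cited explicit solution of the linear equation), and the formula for $X_t$ is recovered by inverting the transform. You instead peel off the martingale part first, $Y_t=X_t-x-W_t+c\epsilon t$, so that the exponential substitution $Z_t=e^{2\alpha Y_t}$ yields a pathwise ODE whose right-hand side does not involve $Z$; one integration gives the closed form, an It\^o verification gives strong existence, and applying the same reduction to an arbitrary solution gives pathwise uniqueness with no Lipschitz theory at all --- which is welcome, since the drift of \eqref{eqnexp} itself is only locally Lipschitz, as you note. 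For the strong Markov property both arguments land on essentially the same theorem (Karatzas--Shreve Theorem 5.4.20, cited in the paper as Theorem 4.20 and applied there to $Y$), you reaching it via Yamada--Watanabe and well-posedness of the time-homogeneous martingale problem for $X$ directly. What your route buys is self-containedness (no appeal to the cited closed-form solution of \eqref{YtSDE}) and it makes explicit the small translation step the paper glosses over, namely that $Y_t>0$ so that $X_t=\frac{1}{2\alpha}\ln Y_t$ is well defined; what the paper's route buys is brevity and the structural identification of $X$ as the logarithm of a Shiryaev process (Remark \ref{rmk41}), which is reused later in the paper.
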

\begin{proof}
We consider an exponential transform on $X_t$ such that
$$
Y_t = e^{2\alpha X_t}\text{ with } Y_0=e^{2\alpha x}.
$$
By applying Ito lemma we show the coefficients of $Y_t$ satisfy global Lipschitz continuity and linear growth conditions:
\begin{equation}
\label{YtSDE}
dY_t = 2\alpha\left[ \epsilon + (\alpha-c \epsilon) Y_t\right]dt+2\alpha Y_t dW_t.
\end{equation} 
According to Theorem 2.9 in \cite{karatzas2012brownian} we conclude there exists a unique and strong solution $\set{Y_t}_{t\ge 0}$ to SDE \eqref{YtSDE}. Therefore $\set{X_t}_{t\ge 0}$ is the unique and strong solution to SDE \eqref{eqnexp}. On the other hand, refer to \cite[Section 4.4]{talay1994numerical} $Y_t$ has the following explicit form:
$$ 
Y_t = e^{2\alpha \left(W_t-c\epsilon t\right)}\left[Y_0+2\alpha \epsilon \int_0^t e^{-2\alpha \left(W_s-c\epsilon s\right)}ds\right].
$$
Then by substituting $Y_t=e^{2\alpha X_t}$ into the equation above we solve $X_t$. 

Now we consider the strong Markov property. Note that the coefficients in SDE \eqref{YtSDE} are continuous so are bounded on compact subsets of $\mathbb{R}$. Combining the well-posed proof in above, and referring to \cite[Theorem 4.20]{karatzas2012brownian} we show $\set{Y_t}_{t\ge 0}$ is a strong Markov process. Therefore the strong Markov property holds for $\set{X_t}_{t\ge 0}$.
\end{proof}

\begin{rmk}\label{rmk41}
The proof depicts $\set{X_t}_{t\ge 0}$ from another aspect. SDE \eqref{YtSDE} shows $\set{Y_t}_{t\ge 0}$ is a geometric Brownian motion with a mean-reversion drift. Referring to \cite[Equation (9)]{shiryaev1961problem} this is indeed a Shiryaev process. Therefore $\set{X_t}_{t\ge 0}$ is the logarithm of the Shiryaev process. 
\end{rmk}

From the explicit solution in Proposition \ref{prop41} we see $\set{X_t}_{t\ge 0}$ is a semimartingale, where the bounded variation (BV) part consists of a strictly decreasing function and a strictly increasing function. Depending on the Brownian motion path in the exponential integral, for different $t>0$ the BV part could be either positive or negative. However, by observation, when $c=0$ it is clear that only the increasing function is retained. This indicates under special circumstance $\set{X_t}_{t\ge 0}$ could be a submartingale.

\begin{cor}
If $c=0$ then $\set{X_t}_{t\ge 0}$ is a strict submartingale.
\end{cor}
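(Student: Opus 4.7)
Setting $c=0$ reduces SDE \eqref{eqnexp} to $dX_t = \epsilon e^{-2\alpha X_t}dt + dW_t$, so the drift $\epsilon e^{-2\alpha X_t}$ is strictly positive almost surely. The task is therefore to verify the submartingale property rigorously, which splits into (a) integrability $\E[|X_t|]<\infty$ and (b) strict positivity of the conditional drift $\E[X_t-X_s\mid\mathcal{F}_s]$.

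For step (a), I would invoke the explicit representation from Proposition \ref{prop41}, which under $c=0$ becomes
\[
X_t=x+W_t+\frac{1}{2\alpha}\ln\!\left(1+2\epsilon\alpha e^{-2\alpha x}\int_0^t e^{-2\alpha W_s}\,ds\right).
\]
The first two terms are in $L^1$. For the logarithmic term, note that the argument is at least $1$, so the logarithm is non-negative; then the elementary bound $\ln(1+y)\le y$ for $y\ge 0$ combined with Fubini yields
\[
0\le \E\!\left[\ln\!\left(1+2\epsilon\alpha e^{-2\alpha x}\int_0^t e^{-2\alpha W_s}ds\right)\right]\le 2\epsilon\alpha e^{-2\alpha x}\int_0^t \E\!\left[e^{-2\alpha W_s}\right]ds=e^{-2\alpha x}\!\int_0^t e^{2\alpha^2 s}\,ds<\infty.
\]
This establishes $\E[|X_t|]<\infty$ for every $t\ge 0$.

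For step (b), the integrated SDE gives, for $0\le s\le t$,
\[
X_t-X_s=\int_s^t \epsilon e^{-2\alpha X_u}\,du+(W_t-W_s).
\]
The same explicit form shows $e^{-2\alpha X_u}\le e^{-2\alpha x-2\alpha W_u}$, whose expectation is integrable over $[0,t]$, so Fubini applies and
\[
\E\!\left[X_t-X_s\mid\mathcal{F}_s\right]=\int_s^t \epsilon\,\E\!\left[e^{-2\alpha X_u}\mid\mathcal{F}_s\right]du.
\]
Since $e^{-2\alpha X_u}>0$ almost surely for each $u$, the conditional expectation inside is a.s. strictly positive, and integrating a strictly positive function over $[s,t]$ with $s<t$ delivers $\E[X_t\mid\mathcal{F}_s]>X_s$ a.s., which is precisely the strict submartingale property.

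The only genuinely delicate step is the integrability bound in (a); once the explicit representation of Proposition \ref{prop41} is in hand the rest is a direct Fubini and positivity argument. In particular, one must avoid trying to bound $\E[e^{-2\alpha X_u}]$ by inverting Jensen (the exponential is convex but the quantity we need is an upper bound, so Jensen gives the wrong direction); using the pathwise domination $1+2\epsilon\alpha e^{-2\alpha x}\int_0^u e^{-2\alpha W_s}ds\ge 1$ sidesteps this issue cleanly.
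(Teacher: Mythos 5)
Your proof is correct, and it diverges from the paper's in a way worth noting. For integrability you use the elementary bound $\ln(1+y)\le y$ plus Fubini, whereas the paper applies Jensen's inequality for the concave logarithm to get $\mathbb{E}[\ln(1+Y)]\le\ln(\mathbb{E}[1+Y])$; both reduce to computing $\mathbb{E}\bigl[\int_0^t e^{-2\alpha W_s}ds\bigr]$ and are essentially interchangeable (your displayed bound silently drops the constant $2\epsilon\alpha$ in the last equality, but this is immaterial for finiteness). The real difference is in the submartingale step: the paper argues from the explicit representation, observing that the logarithmic term is nonnegative (in effect, that $X_t=x+W_t+A_t$ with $A_t$ a pathwise increasing process started at $0$), and states the conclusion rather tersely; you instead work from the integrated SDE, justify conditional Fubini/Tonelli via the pathwise domination $e^{-2\alpha X_u}\le e^{-2\alpha(x+W_u)}$ (which itself comes from the nonnegativity of the log term), and conclude $\mathbb{E}[X_t-X_s\mid\mathcal{F}_s]=\int_s^t\epsilon\,\mathbb{E}[e^{-2\alpha X_u}\mid\mathcal{F}_s]\,du>0$. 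Your route makes the strictness and the conditioning argument fully explicit — arguably cleaner on exactly the step the paper compresses into one line — at the cost of invoking the SDE and a conditional Fubini argument, while the paper's pathwise decomposition avoids any further expectation computations once integrability is settled. Your closing caution about Jensen concerns bounding $\mathbb{E}[e^{-2\alpha X_u}]$ and is well taken, but note that the paper's use of Jensen is on the concave logarithm and is in the correct direction.
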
 

\begin{proof}
When $c$ is suppressed the solution of $\set{X_t}_{t\ge 0}$ becomes
\begin{equation}\label{eqn:zeroXt}
X_t=x+W_t+\frac{1}{2\alpha}\ln\left(1+2\epsilon \alpha e^{-2\alpha x}\int_0^t e^{-2\alpha W_s}ds\right).
\end{equation}
The adeptness is clear from definition. We consider the $L^1$-integrability of $X_t$. Note that by applying the Jensen's inequality for concave function $\ln(\cdot)$, we have
$$
\mathbb{E}\left[\ln\left(1+2\epsilon \alpha e^{-2\alpha x}\int_0^t e^{-2\alpha W_s}ds\right)\right]\le \ln\left(\mathbb{E}\left[1+2\epsilon \alpha e^{-2\alpha x}\int_0^t e^{-2\alpha W_s}ds\right]\right).
$$
By changing integral and expectation, 
\begin{equation}\label{eqn:deduction1}
\mathbb{E}\left[\int_0^t e^{-2\alpha W_s}ds\right]=\frac{1}{2\alpha^2}\left(e^{2\alpha^2 t}-1\right).
\end{equation}
On the other hand, when $\alpha,\epsilon \in (0,+\infty)$
\begin{equation}\label{eqn:deduction0}
1+2\epsilon \alpha e^{-2\alpha x}\int_0^t e^{-2\alpha W_s}ds\ge 1,\ \forall t\ge 0.
\end{equation}
Therefore 
\begin{equation}\label{eqn:deduction2}
\left|\ln\left(1+2\epsilon \alpha e^{-2\alpha x}\int_0^t e^{-2\alpha W_s}ds\right)\right| = \ln\left(1+2\epsilon \alpha e^{-2\alpha x}\int_0^t e^{-2\alpha W_s}ds\right).
\end{equation}
Combining \eqref{eqn:deduction1} and \eqref{eqn:deduction2} we have
\begin{equation}\label{eqn:deduction3}
\frac{1}{2\alpha}\mathbb{E}\left[\left|\ln\left(1+2\epsilon \alpha e^{-2\alpha x}\int_0^t e^{-2\alpha W_s}ds\right)\right|\right]\le \frac{1}{2\alpha}\ln\left(1+  \frac{\epsilon}{\alpha}e^{-2\alpha x}\left(e^{2\alpha^2 t}-1\right)\right).
\end{equation}
In the end, note that
\begin{equation}\label{eqn:deduction4}
\mathbb{E}\left[\left|W_t\right|\right]=\sqrt{\frac{2t}{\pi}}.
\end{equation}
So applying the triangle inequality and combining \eqref{eqn:deduction3} and \eqref{eqn:deduction4} we show the $L^1$-integrability of $X_t$ by
$$
\mathbb{E}\left[\left|X_t\right|\right]\le x+\sqrt{\frac{2t}{\pi}}+\frac{1}{2\alpha}\ln\left(1+  \frac{\epsilon}{\alpha}e^{-2\alpha x}\left(e^{2\alpha^2 t}-1\right)\right)<+\infty.
$$
Finally, the non-decreasing conditional expectation of $\mathbb{E}\left[X_t\bigg|\mathcal{F}_s\right]$, for $0\le s<t<+\infty$, is given by again using \eqref{eqn:deduction0}, that 
\[
\ln\left(1+2\epsilon \alpha e^{-2\alpha x}\int_0^t e^{-2\alpha W_s}ds\right)>0.
\]
This concludes our proof.
\end{proof}

\subsection{Probability Distribution of $\set{X_t}_{t\ge 0}$}
We consider finding the distribution of $\set{X_t}_{t\ge 0}$. By Proposition \ref{prop41} we see the solution of $X_t$ involves Brownian motion and its exponential integral. Similar problem has been answered by A. Dassios and J. Nagaradjasarma \cite{dassios2006square} for the square-root process. G. Peskir \cite{peskir2006fundamental} deduced the fixed time distribution for the Shiryaev process in a special case. But for the general case only the Laplace transform has been given. Here we refer to the results about Brownian motion and its exponential integral in H. Matsumoto and M. Yor \cite{matsumoto2005exponential, yor1992some}, and have the following proposition.

\begin{prop}\label{prop:density}
For fixed $t>0$ and $u\in \Real$, the probability density of $X_t$ is given by
\[
\mathbb{P}\left(X_t\in du\right)=\alpha du\cdot \left[\int_0^\infty\zeta(u;\frac{c\epsilon}{\alpha},y)\exp\left(-\frac{c^2\epsilon^2 t+1/y+\zeta(u;2,y)}{2}\right) \theta\left(\zeta(u;1,y),\alpha^2 t\right)dy\right],
\] 
where
\[
\theta(r,s)=\frac{r}{\sqrt{2\pi^3s}}e^{\frac{\pi^2}{2s}}\int_0^\infty e^{-\frac{v^2}{2s}-r\cosh(v)}\sinh(v)\sin\left(\frac{\pi v}{s}\right)dv
\]
and
\[
\zeta(u;\mu,y)=\frac{\left(1+2\frac{\epsilon}{\alpha} e^{-2\alpha x}y\right)^{\frac{\mu}{2}}e^{-\mu\cdot \alpha(u-x)}}{y}.
\]
\end{prop}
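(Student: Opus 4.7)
The plan is to combine the explicit pathwise formula for $X_t$ from Proposition~\ref{prop41} with the Matsumoto--Yor joint density for a Brownian motion with drift and its exponential functional. First, I set $B_s := W_s - c\epsilon s$, so Proposition~\ref{prop41} reads
\[
X_t = x + B_t + \frac{1}{2\alpha}\ln\!\left(1 + 2\epsilon\alpha e^{-2\alpha x}\int_0^t e^{-2\alpha B_s}ds\right),
\]
and then perform the space--time rescaling $\beta_\tau := -\alpha B_{\tau/\alpha^2}$, $T := \alpha^2 t$. By Brownian scaling $\beta$ is a Brownian motion with drift $\mu := c\epsilon/\alpha$, and the substitution $\tau = \alpha^2 s$ gives $\int_0^t e^{-2\alpha B_s}ds = \alpha^{-2}\hat A_T$ with $\hat A_T := \int_0^T e^{2\beta_\tau}d\tau$. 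Consequently $X_t$ becomes a deterministic function of the pair $(\beta_T,\hat A_T)$ through
\[
e^{2\alpha(X_t - x)} = e^{-2\beta_T}\!\left(1 + \frac{2\epsilon}{\alpha}e^{-2\alpha x}\hat A_T\right).
\]

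Second, I invoke the Matsumoto--Yor joint density (see \cite{matsumoto2005exponential, yor1992some}), which for a Brownian motion with drift $\mu$ states
\[
\prob(\beta_T \in d\eta,\ \hat A_T \in dy) = e^{\mu\eta - \mu^2 T/2}\exp\!\left(-\frac{1+e^{2\eta}}{2y}\right)\theta\!\left(\frac{e^\eta}{y},T\right)\frac{d\eta\,dy}{y}.
\]
With $y$ fixed, solving the preceding display for $\eta = \beta_T$ in terms of $u = X_t$ yields $\eta(u,y) = -\alpha(u-x) + \tfrac{1}{2}\ln(1+ky)$ where $k := 2\epsilon\alpha^{-1}e^{-2\alpha x}$. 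This map is affine in $u$ with $|\partial\eta/\partial u| = \alpha$ independent of $y$, so the Jacobian of the change of variable $(\eta,y)\mapsto(u,y)$ is $\alpha$; integrating out $y>0$ then expresses $\prob(X_t\in du)$ as $\alpha\,du$ times an integral over $y$ of the transformed integrand.

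The remaining step is an algebraic identification. Substituting $\eta(u,y)$ gives $e^\eta = e^{-\alpha(u-x)}(1+ky)^{1/2}$, whence $e^{\mu\eta}/y = \zeta(u;\mu,y)$, $e^\eta/y = \zeta(u;1,y)$ and $e^{2\eta}/y = \zeta(u;2,y)$; combined with $\mu^2 T = c^2\epsilon^2 t$ and $T = \alpha^2 t$ this reproduces the stated formula. The main obstacle is not analytic but bookkeeping: choosing the rescaling $\beta_\tau = -\alpha B_{\tau/\alpha^2}$ so that the exponential functional assumes the canonical form $\int_0^T e^{2\beta_\tau}d\tau$ to which Matsumoto--Yor applies, and correctly tracking the sign, scale, and drift of $\beta$ through the transformation $B\mapsto -\alpha B_{\cdot/\alpha^2}$.
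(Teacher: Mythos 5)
Your proposal is correct and takes essentially the same route as the paper: the explicit solution from Proposition~\ref{prop41}, the space--time rescaling to a Brownian motion with drift $\mu=c\epsilon/\alpha$ at time $s=\alpha^{2}t$, the Matsumoto--Yor joint law of the drifted motion and its exponential functional, and the identification of the $\zeta$ terms. The only cosmetic difference is that you change variables $(\eta,y)\mapsto(u,y)$ directly with Jacobian $\alpha$, whereas the paper integrates the joint density over $\{z\ge g(u,y)\}$ to get $\mathbb{P}(X_t\le u)$ and then differentiates in $u$ — the same computation.
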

\begin{proof}
Let $s=\alpha^2 t$, then for another standard Brownian motion $B_s$, with probability $1$ the following equation holds true
\[
B_s+\frac{c\epsilon}{\alpha} s=-\alpha\left(W_t-c\epsilon t\right).
\]
Denote by
\begin{equation}\label{eqn:Bu}
\mu:=\frac{c\epsilon}{\alpha},\ B_s^{(\mu)}:=B_s+\mu s,\ A_s^{(\mu)}:=\int_0^s e^{2B_v^{(\mu)}}dv.
\end{equation}
Then referring to \cite{matsumoto2005exponential, yor1992some} we have
\begin{equation}\label{eqn:XtDensity}
\mathbb{P}\left(A_s^{(\mu)}\in dy,B_s^{(\mu)}\in dz\right)=\frac{1}{y}\exp\left(\mu z-\frac{\mu^2 s}{2}-\frac{1+e^{2z}}{2y}\right)\cdot \theta\left(\frac{e^{z}}{y},s\right)dydz, 
\end{equation}
where
$$
\theta(r,\xi)=\frac{r}{\sqrt{2\pi^3\xi}}e^{\frac{\pi^2}{2\xi}}\int_0^\infty e^{-\frac{v^2}{2\xi}-r\cosh(v)}\sinh(v)\sin\left(\frac{\pi v}{\xi}\right)dv.
$$
On the other hand, re-express $X_t$ using $B_s^{(\mu)}$ and $A_s^{(\mu)}$. Note that
\[
A_s^{(\mu)}=\int_0^{\alpha^2 t}\exp\left(-2\alpha\left(W_{\frac{v}{\alpha^2}}-c\epsilon \frac{v}{\alpha^2}\right)\right)dv.
\]
By changing variable with $w=\frac{v}{\alpha^2}$ we have
\begin{equation}\label{eqn:Au}
A_s^{(\mu)}=\alpha^2\int_0^{ t}e^{-2\alpha\left(W_w-c\epsilon w\right)}dw.
\end{equation}
Rewrite $\set{X_t}_{t\ge 0}$ in Proposition \ref{prop41} using \eqref{eqn:Bu} and \eqref{eqn:Au}, we get
\begin{equation}\label{eqn:Transform}
X_t\overset{\mathbb{P}}{=}x-\frac{B_s^{(\mu)}}{\alpha}+\frac{1}{2\alpha}\ln\left(1+2\frac{\epsilon}{\alpha} e^{-2\alpha x}A_s^{(\mu)} \right).
\end{equation}
We now consider the density function for $X_t$. Note that for fixed $u\in \Real$ and $X_t\le u$, \eqref{eqn:Transform} implies
$$
\alpha(x-u)+\frac{1}{2}\ln\left(1+2\frac{\epsilon}{\alpha} e^{-2\alpha x}A_s^{(\mu)} \right)\le B_s^{(\mu)}.
$$
Denote by
$$
g\left(u,A_s^{(\mu)} \right):=\alpha(x-u)+\frac{1}{2}\ln\left(1+2\frac{\epsilon}{\alpha} e^{-2\alpha x}A_s^{(\mu)} \right).
$$
Considering \eqref{eqn:XtDensity} we have
\begin{equation}
\mathbb{P}\left(X_t\le u\right)= \int_{y\in (0,\infty)}\int_{z\ge g(u,y)}\mathbb{P}\left(A_s^{(\mu)}\in dy,B_s^{(\mu)}\in dz\right).\label{sim:1}
\end{equation}
Taking derivatives on $u$ we further get
\begin{align}
\mathbb{P}\left(X_t\in du\right)&=-\int_{y\in (0,\infty)}\frac{1}{y}\exp\left(\mu g(u,y)-\frac{\mu^2 s}{2}-\frac{1+e^{2g(u,y)}}{2y}\right)\cdot \theta\left(\frac{e^{g(u,y)}}{y},s\right)dy \cdot g^{'}_{u}(u,y)du \\
&=\alpha\int_{y\in (0,\infty)}\frac{1}{y}\exp\left(\mu g(u,y)-\frac{\mu^2 s}{2}-\frac{1+e^{2g(u,y)}}{2y}\right)\cdot \theta\left(\frac{e^{g(u,y)}}{y},s\right)dyd u.\label{eqn:my15}
\end{align}
Introduce the function $\zeta(u;\mu,y)$ that
$$
\zeta(u;\mu,y):=\frac{e^{\mu g(u,y)}}{y}=\frac{\left(1+2\frac{\epsilon}{\alpha} e^{-2\alpha x}y\right)^{\frac{\mu}{2}}e^{-\mu\cdot \alpha(u-x)}}{y}.
$$
Then rearranging \eqref{eqn:my15} we have
\[
\frac{1}{y}\exp\left(\mu g(u,y)-\frac{\mu^2 s}{2}-\frac{1+e^{2g(u,y)}}{2y}\right)\cdot \theta\left(\frac{e^{g(u,y)}}{y},s\right)=\zeta(u;\mu,y)\exp\left(-\frac{\mu^2 s}{2}-\frac{1/y+\zeta(u;2,y)}{2}\right)\cdot \theta\left(\zeta(u;1,y),s\right).
\]
The proof is concluded by substituting $\mu=\frac{c\epsilon}{\alpha}$ and $s=\alpha^2t$ into the equation in above.
\end{proof}

\begin{rmk}
The function $\theta(r,s)$ is closely related to the study in Hartman-Watson distributions \cite{hartman1974normal}. As noticed by H. Matsumoto, M. Yor \cite{matsumoto2005exponential}, and other researchers \cite{barrieu2004study, ishiyama2005methods}, $\theta(r,s)$ is highly oscillating, especially for small $s$. Therefore it is not easy to compute the accurate values of the density. 
\end{rmk}

In practice it is more meaningful to provide the probability distribution function rather than the density function. This requires an extra integral on $\mathbb{P}(X_t\in du)$. Considering the integral involved in $\theta(r,s)$, and the integral taking on $\theta(r,s)$, in total we need to compute three integrals for the distribution function $\mathbb{P}(X_t\le du)$. A direct finite-difference scheme would therefore generate computational efficiency issue. Instead, we consider computing the probability via Monte Carlo simulation.

Referring to Equation \eqref{sim:1} and the density function in Proposition \ref{prop:density}, we have two choices in developing the simulation algorithm. Based on \eqref{sim:1} we could follow the acceptance-rejection approach by considering the relative positions between $z$ and $g(u,y)$. However, in the present paper we will concentrate on the direct sampling scheme by employing the explicit density function.

\begin{prop}\label{def:propFunc}
For fixed $t>0$ and $u\in\Real$, define 
\[
m(z,y)=\alpha\zeta(z;\frac{c\epsilon}{\alpha},y)\exp\left(-\frac{c^2\epsilon^2 t+1/y+\zeta(z;2,y)}{2}\right)\cdot \hat{\theta}\left(\zeta(z;1,y),\alpha^2 t\right),
\]
where
\[
\hat{\theta}(r,s):=\frac{r}{2s}e^{\frac{\pi^2}{2s}}\mathbb{E}\left[Ve^{-r\cosh(V)}\sinh(V)\sinc\left(\frac{V}{s}\right)\right]
\]
with $\sinc(w):=\frac{\sin(w\pi)}{w\pi}$ and $V\sim N(0,\sqrt{s})$. Then for two i.i.d. uniformly distributed random variables $U$ and $Y$, the probability distribution of $X_t$ is given by
\[
\prob\brck{X_t\le u}=\E\brcksqr{\frac{m\brck{-\frac{1}{U}+u+1,\frac{1}{Y}-1}}{U^2Y^2}}.
\]
\end{prop}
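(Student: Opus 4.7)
The plan is to reduce the claim to two ingredients: a rewriting of the Hartman--Watson kernel $\theta(r,s)$ as the Gaussian expectation $\hat\theta(r,s)$, and a pair of rational changes of variables that pull the iterated integral for $\prob(X_t\le u)$ from $(-\infty,u]\times(0,\infty)$ back to the unit square $(0,1)^2$, so that it is immediately recognisable as an expectation under the product uniform law.

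First I would verify the identity $\theta(r,s)=\hat\theta(r,s)$. Starting from the definition of $\theta$ in Proposition \ref{prop:density}, the integrand $e^{-v^{2}/(2s)-r\cosh v}\sinh v\,\sin(\pi v/s)$ is an even function of $v$ (a product of two even factors and two odd ones), so $\int_{0}^{\infty}$ extends to $\tfrac{1}{2}\int_{-\infty}^{\infty}$. Writing $\sin(\pi v/s)=(\pi v/s)\sinc(v/s)$, pulling out the Gaussian normalisation $(2\pi s)^{-1/2}$, and collecting the remaining constants, the integral becomes $\frac{s}{\pi\sqrt{2\pi s}}\,\E\!\left[V e^{-r\cosh V}\sinh V\,\sinc(V/s)\right]$ with $V\sim N(0,\sqrt{s})$; matching against the prefactor $r/\sqrt{2\pi^{3}s}=r/(\pi\sqrt{2\pi s})$ in the definition of $\theta$ gives $\theta=\hat\theta$, and therefore $\int_{0}^{\infty}m(z,y)\,dy$ is exactly the density of $X_t$ at $z$ appearing in Proposition \ref{prop:density}.

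Second, by Proposition \ref{prop:density} and the identity just proved,
\[
\prob(X_t\le u)=\int_{-\infty}^{u}\int_{0}^{\infty}m(z,y)\,dy\,dz,
\]
with a non-negative integrand, so no delicate Fubini argument is required. Two rational substitutions then map each semi-infinite domain to $(0,1)$: $y=1/Y-1$ gives $dy=-dY/Y^{2}$ with $Y\in(0,1)$ sweeping $y\in(0,\infty)$; $z=u+1-1/U$ gives $dz=dU/U^{2}$ with $U\in(0,1)$ sweeping $z\in(-\infty,u]$. Substituting both produces
\[
\prob(X_t\le u)=\int_{0}^{1}\!\int_{0}^{1}\frac{m\!\left(-1/U+u+1,\,1/Y-1\right)}{U^{2}Y^{2}}\,dU\,dY,
\]
which is precisely $\E\!\left[m(-1/U+u+1,1/Y-1)/(U^{2}Y^{2})\right]$ for $U,Y$ i.i.d.\ uniform on $(0,1)$.

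I do not anticipate any deep obstacle: the argument reduces to parity of an integrand and two elementary changes of variables. The only place that demands genuine care is the equality $\theta=\hat\theta$, where one must track the Gaussian normalisation and the $\sinc$ reparametrisation accurately; as the remark after Proposition \ref{prop:density} already warns, the resulting Monte Carlo estimator will still inherit the oscillatory behaviour of $\theta$ for small $\alpha^{2}t$, but that is a numerical concern for the simulation stage and does not affect the validity of the identity itself.
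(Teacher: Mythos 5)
Your proposal is correct and follows essentially the same route as the paper: you rewrite $\theta$ as the Gaussian expectation $\hat\theta$ using $\sin(\pi v/s)=(\pi v/s)\sinc(v/s)$ and the evenness of $v\sinh(v)e^{-r\cosh(v)}\sinc(v/s)$, then pull the double integral $\int_{-\infty}^{u}\int_{0}^{\infty}m(z,y)\,dy\,dz$ onto the unit square with $z=u+1-1/U$, $y=1/Y-1$, which are exactly the paper's substitutions $U=-1/(z-u-1)$, $Y=1/(1+y)$. The only blemish is the intermediate constant: after symmetrising and extracting the Gaussian density, the integral equals $\frac{\pi\sqrt{2\pi s}}{2s}\,\E\!\left[Ve^{-r\cosh(V)}\sinh(V)\sinc\!\left(\frac{V}{s}\right)\right]$ rather than $\frac{s}{\pi\sqrt{2\pi s}}\,\E[\cdot]$; with that correction the prefactor $\frac{r}{\pi\sqrt{2\pi s}}$ combines to give precisely $\frac{r}{2s}$, i.e.\ $\theta=\hat\theta$ as claimed.
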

\begin{proof}
First we show the identity between $\hat{\theta}(r,s)$ and $\theta(r,s)$. Recall Proposition \ref{prop:density} that
\[
\theta(r,s)=\frac{r}{\sqrt{2\pi^3s}}e^{\frac{\pi^2}{2s}}\int_0^\infty e^{-\frac{v^2}{2s}-r\cosh(v)}\sinh(v)\sin\left(\frac{\pi v}{s}\right)dv.
\]
Rewriting the function we get
\begin{align*}
\theta(r,s)=&\frac{r}{s}e^{\frac{\pi^2}{2s}}\int_0^\infty ve^{-r\cosh(v)}\sinh(v)\sinc\left(\frac{v}{s}\right) \cdot  \frac{1}{\sqrt{2\pi s}}e^{-\frac{v^2}{2s}}dv\\
=&\frac{r}{2s}e^{\frac{\pi^2}{2s}}\mathbb{E}\left[Ve^{-r\cosh(V)}\sinh(V)\sinc\left(\frac{V}{s}\right)\right]\\
=:& \hat{\theta}(r,s).
\end{align*}
Note that the second equation holds true is due to the fact 
\[
(v\sinh(v))\cdot e^{-r\cosh(v)}\cdot \sinc\left(\frac{v}{s}\right)
\]
is an even function.

Now let $m(z,y)$ to be defined as in Proposition \ref{def:propFunc}. Based on the identity between $\theta(r,s)$ and $\hat{\theta}(r,s)$, and referring to Proposition \ref{prop:density}, we can write the probability distribution of $X_t$ as
\begin{equation}
\label{prob:org}
\prob\brck{X_t\le u}=\int_{-\infty}^{u}\int_0^\infty m(z,y)dydz.
\end{equation}
Change variables that
\[
U=-\frac{1}{z-u-1},\ Y=\frac{1}{1+y}.
\]
Then re-expressing $z,y$ by $U,Y$ in \eqref{prob:org} we have
\[
\prob\brck{X_t\le u}=-\int_0^1 \int_1^0m\brck{-\frac{1}{U}+u+1,\frac{1}{Y}-1}\frac{dUdY}{U^2Y^2} .
\]
By noticing the fact that uniform distribution has constant probability density $dU=dY=1$ we conclude the proof.
\end{proof}
\begin{rmk}
The main consideration of involving $\sinc(\cdot)$ is to reduce the oscillation effects from function $\sin(\cdot)$. From the numerical calculation point of view, the function, though cannot totally solve the oscillating issue, could mitigate the chaos to some extent. 
\end{rmk}

\begin{prop} For $t\uparrow +\infty$, the stationary distribution of $X_{\infty}:=\lim_{t\uparrow +\infty}X_t$ is given by
\[
p(x)=\left[\int_{y\in \Real}\frac{w(x)}{w(y)} dy\right]^{-1},
\]
where
\[
w(x)=\exp\set{\frac{\epsilon}{\alpha}e^{-2\alpha x}+2\epsilon cx}.
\]
\end{prop}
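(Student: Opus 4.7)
The approach is to solve the stationary forward Kolmogorov (Fokker--Planck) equation associated with SDE \eqref{eqnexp}. Since the diffusion coefficient equals one and the drift is $\mu(x)=\epsilon(e^{-2\alpha x}-c)$, the adjoint generator acting on a density $p$ reads $\mathcal{L}^{*}p(x)=\tfrac{1}{2}p''(x)-(\mu(x)p(x))'$, and an invariant density must satisfy $\mathcal{L}^{*}p\equiv 0$. Integrating this identity once gives $\tfrac{1}{2}p'(x)-\mu(x)p(x)=C$; the standard argument (integrating again and requiring $p\in L^{1}(\Real)$ with vanishing tails) forces the stationary probability current $C$ to be zero, leaving the first-order linear ODE $p'(x)=2\mu(x)p(x)$.

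Separation of variables then yields
\[
p(x)\;\propto\;\exp\left(2\int_{0}^{x}\mu(y)\,dy\right)=\exp\left(-\frac{\epsilon}{\alpha}e^{-2\alpha x}-2c\epsilon x+\frac{\epsilon}{\alpha}\right),
\]
and absorbing the constant $e^{\epsilon/\alpha}$ into the normalisation identifies $p(x)\propto 1/w(x)$. Writing the normaliser $[\int 1/w(y)\,dy]^{-1}$ as $w(x)\cdot[\int w(x)/w(y)\,dy]^{-1}$, which is legitimate because $w(x)$ does not depend on the dummy variable $y$, reproduces the stated form exactly.

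Two things then remain. First, integrability of $1/w$ on $\Real$: as $y\to+\infty$ the double-exponential term in $w(y)$ tends to $1$ while $e^{-2c\epsilon y}$ provides exponential decay of $1/w(y)$, integrable precisely when $c>0$; as $y\to-\infty$ the super-exponential growth of $e^{-2\alpha y}$ makes $1/w(y)$ decay double-exponentially without issue. Second, one must show that $X_{t}$ actually converges in law to the measure with this density. Combining the strong Markov property from Proposition \ref{prop41} with the finiteness of the speed measure just computed, the standard scale function / speed measure framework for regular one-dimensional time-homogeneous diffusions (see e.g. Karatzas and Shreve, Section 5.5) yields positive recurrence and weak convergence $X_{t}\Rightarrow X_{\infty}$. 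I anticipate this last step to be the main obstacle: making precise that the algebraically identified invariant density really is the limiting law of $X_{t}$ (rather than merely a formal fixed point of $\mathcal{L}^{*}$), and separately addressing the boundary case $c=0$, where $1/w$ ceases to be integrable, consistent with the earlier corollary that $\{X_{t}\}$ is then a strict submartingale admitting no stationary law.
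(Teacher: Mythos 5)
Your proposal is correct and follows essentially the same route as the paper: both solve the stationary Fokker--Planck equation and eliminate the non-normalisable part of the general solution (you by the zero probability current argument after one integration, the paper by noting $\int_{-\infty}^{x}w(y)\,dy=+\infty$ and hence setting $C_1=0$), arriving at $p\propto 1/w$ with the stated normalisation. Your extra observations --- that integrability of $1/w$ requires $c>0$, and that convergence in law to the invariant density needs the positive-recurrence/speed-measure argument --- go beyond what the paper's proof records, which stops after fixing the normalising constant.
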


\begin{proof}
Consider the Fokker-Planck equation at $t=+\infty$:
\[
\frac{1}{2}p^{''}(x)-\epsilon(e^{-2\alpha x}-c)p^{'}(x)+2\alpha\epsilon e^{-2\alpha x}p(x)=0.
\]
Define $w(x)$ as in the proposition. Solving the ODE without boundary conditions we get
\[
p(x) =\frac{  {C_1}\int_{-\infty}^x w(y) dy+{ C_2} }{w(x) }.
\]
Note that $w(y)\uparrow +\infty$ and is dominated by a double exponential function when $y\downarrow -\infty$. As $w(y)\ge 0$ on $\Real$, so for $x>-\infty$ the integral does not exist:
\[
\int_{-\infty}^x w(y)  dy=+\infty.
\]
In order to get a valid density function we therefore set $C_1=0$. Determining $C_2$ by the full-integrability condition we conclude the proof.

\end{proof}

\begin{rmk}
The stationary distribution is right-skewed due to the fact that the double exponential function diverges much faster than the exponential function. In fixed income modelling, the double exponential function is also used in term structure calibrations (cf.  Nelson-Siegel model \cite{hladikova2012term}). 
\end{rmk}

\subsection{Existence of First Passage Time}
In the later section we will deduce the probability density function of the FPT for $\set{X_t}_{t\ge 0}$. Before conducting the calculations we show the existence of the FPT to any constant level $a\in \Real$.
\begin{prop}\label{prop43}
$\set{X_t}_{t\ge 0}$ is a recurrent process on $\Real$.
\end{prop}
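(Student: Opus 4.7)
The plan is to invoke the classical scale-function criterion for one-dimensional regular diffusions (cf.\ \cite[Section 5.5]{karatzas2012brownian}, Feller's test). Since the diffusion coefficient is the constant $1$ and the drift $b(x)=\epsilon(e^{-2\alpha x}-c)$ is continuous on $\Real$, the endpoints $\pm\infty$ are natural boundaries in Feller's classification provided the scale function $s$ satisfies $s(-\infty)=-\infty$ and $s(+\infty)=+\infty$, in which case $\set{X_t}_{t\ge 0}$ is recurrent; non-explosion was already guaranteed by the strong-solution result of Proposition \ref{prop41}. So the task reduces to verifying divergence of the scale function at both tails.

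First I would compute the scale density. Using $\int_0^y b(z)\,dz = -\frac{\epsilon}{2\alpha}(e^{-2\alpha y}-1) - c\epsilon y$, one obtains
\[
s'(y) = \exp\brck{-2\int_0^y b(z)\,dz} \;\propto\; \exp\brck{\frac{\epsilon}{\alpha} e^{-2\alpha y} + 2c\epsilon y},
\]
which, up to a positive multiplicative constant, is precisely the weight $w(y)$ that showed up in the stationary density computed earlier — consistent with the fact that the speed density is its reciprocal.

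Next I would examine the two tails separately. As $y \downarrow -\infty$, the term $\frac{\epsilon}{\alpha} e^{-2\alpha y}$ diverges double-exponentially because $\alpha,\epsilon > 0$, so $s'(y)\to +\infty$ extremely fast and $\int_{-\infty}^0 s'(y)\,dy = +\infty$, giving $s(-\infty) = -\infty$. As $y \uparrow +\infty$, the double-exponential factor decays to $1$, so asymptotically $s'(y) \sim e^{2c\epsilon y}$ when $c>0$ and $s'(y) \to 1$ when $c=0$; in either case $\int_0^{+\infty} s'(y)\,dy = +\infty$, hence $s(+\infty) = +\infty$. The scale-function criterion then delivers recurrence on $\Real$.

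I do not expect a substantive obstacle here — once the proposition is recognised as a scale-function computation, everything is routine. The only mildly delicate point is the edge case $c=0$, where the process is a strict submartingale by the preceding corollary, so one might worry that the upward drift eventually carries $X_t$ off to $+\infty$. The scale-function analysis absorbs this concern: even when $c=0$, one still has $s'(y) \to 1$ rather than $s'(y) \to 0$ as $y \uparrow +\infty$, so the integral $\int_0^{+\infty} s'(y)\,dy$ continues to diverge (linearly instead of exponentially), and the same recurrence conclusion goes through uniformly in $c \in [0,1]$.
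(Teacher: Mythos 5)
Your argument is correct, and it rests on the same key result the paper invokes (the scale-function recurrence criterion, Proposition 5.22 in \cite{karatzas2012brownian}), but you apply it along a different route: you work with $\set{X_t}_{t\ge 0}$ directly on $\Real$, computing the scale density $s'(y)\propto \exp\brck{\frac{\epsilon}{\alpha}e^{-2\alpha y}+2c\epsilon y}$ and checking $s(-\infty)=-\infty$, $s(+\infty)=+\infty$, whereas the paper first passes to $Z_t=e^{-2\alpha X_t}$, which satisfies a Verhulst/logistic-type SDE on $I=(0,+\infty)$, and applies the criterion there, verifying nondegeneracy and local integrability for the degenerate coefficient $2\alpha z$ near the boundary $z=0$ and then showing $s(0^+)=-\infty$, $s(\infty^-)=+\infty$ for the transformed scale function. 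Your direct route is slightly more economical: since the diffusion coefficient of $X_t$ is the constant $1$ and the drift is continuous, nondegeneracy and local integrability are immediate, and no change of variables or boundary bookkeeping at $z=0$ is needed; the two scale computations are of course equivalent, being related by the monotone map $x\mapsto e^{-2\alpha x}$, and your observation that $s'$ coincides with the weight $w$ from the stationary-density proposition (speed density proportional to $1/w$) is a nice consistency check. The only point worth tightening is the phrase about $\pm\infty$ being ``natural boundaries'': the Feller boundary classification also involves the speed measure, and what you actually verify (and all that is needed) is the divergence of the scale function at both ends, which by the cited criterion already yields recurrence; likewise the appeal to Proposition \ref{prop41} for non-explosion is harmless but not required, as the criterion delivers $\prob(\sup_t X_t=+\infty,\ \inf_t X_t=-\infty)=1$ on its own. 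Your handling of the $c=0$ edge case ($s'(y)\to$ constant, integral still divergent) is correct and mirrors the paper's bound $1+\frac{c\epsilon}{\alpha}\ge 1$ in the exponent of its transformed scale integral.
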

\begin{proof}
Consider the substitution that
\begin{equation}\label{eqn:ZtTransform}
Z_t:=e^{-2\alpha X_t}, \ Z_0=e^{-2\alpha x}.
\end{equation}
Applying the Ito lemma we get
\begin{equation}\label{eqn:ZtSDE}
dZ_t=2\alpha Z_t(-\epsilon Z_t+c\epsilon+\alpha )dt -2\alpha Z_t dW_t.
\end{equation}
Based on Proposition \ref{prop41} we know $\set{Z_t}_{t\ge 0}$ is a diffusion process with unique and strong solution; moreover the strong Markov property holds as well\footnote{In fact one can even find the explicit solution of $Z_t$ by referring to the stochastic Verhulst equation in \cite{talay1994numerical}.}.

Our construction indicates that $\set{Z_t}_{t\ge 0}$ only takes value on the positive half-plane. Additionally by checking with \eqref{eqn:ZtSDE} we see $Z_t=0$ is an absorbing bound. So consider
$$
I=(0,+\infty)
$$
as the domain of $\set{Z_t}_{t\ge 0}$. We show $\set{Z_t}_{t\ge 0}$ is recurrent on $I$ using the scale function as discussed in \cite{karatzas2012brownian}. For any fixed parameter $A\in I$, we define 
\begin{equation}\label{eqn:scaleFunction}
s(z):=\int_A^z \exp\left\{-\int_A^\xi \frac{ (-\epsilon \zeta+c\epsilon +\alpha)}{\alpha \zeta}d\zeta\right\}d\xi,\ z\in I.
\end{equation}
Note that for any $z\in I$ the \emph{nondegeneracy} condition
$$
4\alpha^2 z^2>0
$$
holds. Besides, for the fixed $z\in I$, consider $\delta>0$ is small enough such that $z-\delta \in I$. Then the \emph{local integrability} condition is satisfied by showing
$$
\int_{z-\delta}^{z+\delta}\frac{1+2\alpha \zeta\left|(-\epsilon \zeta +c\epsilon+\alpha)\right|}{4\alpha^2 \zeta^2}d\zeta\le \frac{\delta}{2\alpha^2\left(z^2-\delta^2\right)}+\frac{\epsilon}{\alpha}\delta+\frac{c\epsilon +\alpha}{2\alpha}\ln\left(\frac{z+\delta}{z-\delta}\right)<+\infty.
$$
Therefore the scale function \eqref{eqn:scaleFunction} is well defined. 

We now calculate the limit value of the scale function at boundaries of $I$. Rewrite the $s(z)$ in \eqref{eqn:scaleFunction} as
\begin{equation}\label{eqn:reexpresssz}
s(z)=A^{1+\frac{c\epsilon}{\alpha}}e^{-\frac{A\epsilon }{\alpha}}\int_A^z\frac{e^{\frac{\epsilon }{\alpha}\xi }}{\xi^{1+\frac{c\epsilon}{\alpha}}}d\xi.
\end{equation}
Let $l^{+}:=0^{+}$ and $r^{-}:=\infty^{-}$. Substituting the boundary values into \eqref{eqn:reexpresssz} we get
$$
s(l^{+})=-A^{1+\frac{c\epsilon}{\alpha}}e^{-\frac{A\epsilon }{\alpha}}\int_0^A\frac{e^{\frac{\epsilon }{\alpha}\xi }}{\xi^{1+\frac{c\epsilon}{\alpha}}}d\xi=-\infty,
$$
and 
$$
s(r^{-})=A^{1+\frac{c\epsilon}{\alpha}}e^{-\frac{A\epsilon }{\alpha}}\int_A^{\infty^{-}}\frac{e^{\frac{\epsilon }{\alpha}\xi }}{\xi^{1+\frac{c\epsilon}{\alpha}}}d\xi=+\infty.
$$
Therefore according to Proposition 5.22 in \cite{karatzas2012brownian} we conclude that $\set{Z_t}_{t\ge 0}$ is recurrent on $I$. The recurrence of $\set{X_t}_{t\ge 0}$ on $\Real$ follows by \eqref{eqn:ZtTransform}.
\end{proof}

\begin{cor}\label{cor:FPTExistence}
For any $a, X_0=x\in \Real$, the first hitting time of $\set{X_t}_{t\ge 0}$ from $x$ to $a$ exists. 
\end{cor}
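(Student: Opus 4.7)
The plan is to derive Corollary \ref{cor:FPTExistence} as an immediate consequence of the recurrence established in Proposition \ref{prop43}, together with the fact that $\set{X_t}_{t\ge 0}$ has continuous sample paths. First I would fix arbitrary $a, x \in \Real$ and introduce
\[
\tau_a := \inf\set{t\ge 0 : X_t = a},
\]
with the convention $\inf \emptyset = +\infty$. Since $X$ is the strong solution of the Brownian-driven SDE \eqref{eqnexp} from Proposition \ref{prop41}, its paths are continuous, so $\tau_a$ is a genuine stopping time with respect to $\mathcal{F}$. The only substantive content of the corollary is therefore $\prob(\tau_a < \infty) = 1$.

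For this, I would invoke the recurrence of Proposition \ref{prop43}. The proof there establishes recurrence for the transformed process $Z_t = e^{-2\alpha X_t}$ on $I = (0, +\infty)$ via \cite[Proposition 5.22]{karatzas2012brownian}, using the computations $s(0^+) = -\infty$ and $s(\infty^-) = +\infty$ for the scale function. In the Karatzas-Shreve formulation this is equivalent to the pointwise hitting statement $\prob^{z_0}(\tau_z^Z < \infty) = 1$ for every pair $z_0, z \in I$, where $\tau_z^Z$ denotes the hitting time of $Z$ at level $z$.

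Finally I would transfer back to $X$ through the map $y \mapsto e^{-2\alpha y}$, which is a continuous strictly decreasing bijection from $\Real$ onto $(0, +\infty)$. Continuity and strict monotonicity yield the pathwise identity $\tau_a = \tau_{e^{-2\alpha a}}^{Z}$ with $Z_0 = e^{-2\alpha x} \in I$, so the preceding paragraph gives $\prob(\tau_a < \infty) = 1$, which is exactly the claim. I anticipate essentially no obstacle here, since the substantive work is already done in Proposition \ref{prop43}. The only small care required is to confirm that the flavour of recurrence being quoted is the pointwise hitting version rather than the weaker neighbourhood-recurrence notion, so that the transfer to a single level $a$ is justified without an additional limiting argument.
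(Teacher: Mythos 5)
Your proposal is correct and follows essentially the same route as the paper: the paper's proof of this corollary is just the one-line observation that it follows directly from Proposition \ref{prop43}, whose own proof already contains the transfer from the recurrence of $Z_t=e^{-2\alpha X_t}$ (via the scale-function computation and \cite[Proposition 5.22]{karatzas2012brownian}, which is indeed the pointwise hitting version) back to $X_t$ through the monotone bijection. Your write-up simply makes explicit the pathwise identification of hitting times that the paper leaves implicit, so there is no substantive difference.
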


\begin{proof}
This directly follows from Proposition \ref{prop43}.
\end{proof}

\begin{rmk}
Note that in Corollary \ref{cor:FPTExistence} there is no restriction on the direction of FPT. More specifically, define 
\[
\tau_{x\uparrow}^a:=\inf\set{t\ge 0: X_t=a|x<a},
\]
and
\[
\tau_{x\downarrow}^a:=\inf\set{t\ge 0: X_t=a|x>a}
\]
to be the FPTs for from below and from above respectively. Then 
\[
\mathbb{P}\left(\tau_{x\uparrow}^a<+\infty\right|X_0=x)=1
\]
and
\[
\mathbb{P}\left(\tau_{x\downarrow}^a<+\infty|X_0=x\right)=1.
\]
\end{rmk}

\section{Downward First Passage Time of $\set{X_t}_{t\ge 0}$}
By our previous analysis $\set{X_t}_{t\ge 0}$ is a well-defined diffusion process. Therefore the corresponding infinitesimal generator exists. Denote by $C^2$ the collection of twice differentiable and continuous functions defined on $\Real$. For $f\in C^2$, the infinitesimal generator of $\set{X_t}_{t\ge 0}$ is given by
\begin{equation}\label{eqn:generator}
\OperatorA f(x)=\epsilon(e^{-2\alpha x}-c)f^{'}(x)+\frac{1}{2}f^{''}(x).
\end{equation}

\subsection{Dirichlet Problem}\label{sec:41}
Note by our settings the filtration $\mathcal{F}$ is continuous on both sides. So it is equivalent to consider the FPT to either an open or a closed set. \emph{W.l.o.g.}, for $a\in \Real$ we define
\[
\mathcal{D}^{u}:=\set{x\in\Real: x> a},\ \mathcal{D}^{l}:=\set{x\in\Real: x< a}
\]
to be the domains of upper and lower regions to $a$. For notational convenience we denote by $\mathcal{D}$ to refer to either $\mathcal{D}^u$ or $\mathcal{D}^l$. 

Let $\partial \mathcal{D}$ to be the set of boundaries of $\mathcal{D}$. Then 
\[
\partial\mathcal{D}^{u}:=\set{a,+\infty},\ \partial\mathcal{D}^{l}:=\set{a,-\infty}.
\]
The FPT of $\set{X_t}_{t\ge 0}$ with $X_0=x\in \mathcal{D}$ can be defined correspondingly as
\[
\tau_x^a:=\inf\set{t\ge 0: X_t\in \partial \mathcal{D}}.
\]
For a short notation we suppress $x,\ a$ and write 
\[
\tau:=\tau_x^a.
\]

The existence of $\tau$ is given by Corollary \ref{cor:FPTExistence}. In this section we follow G. Peskir and A.N. Shiryaev \cite{peskir2006optimal} to deduce the Dirichlet type boundary value problem for the Laplace transform of $\tau$. Consider an arbitrary sequence of well-defined stopping times $\set{\sigma_n}_{1\le n\le +\infty}$. Define
\[
\sigma:=\lim_{n\uparrow +\infty}\sigma_n.
\]
Note that $\set{X_t}_{t\ge 0}$ is a continuous process. Therefore $\set{X_t}_{t\ge 0}$ is continuous over all stopping times, i.e.
\[
\lim_{n\uparrow +\infty}X_{\sigma_n}=X_\sigma.
\]
Moreover, by Proposition \ref{prop41} $\set{X_t}_{t\ge 0}$ is a strong Markov process. For fixed $\beta \ge 0$, define 
\begin{equation}\label{eqn:XtLT}
f(x)=\mathbb{E}_x\left[e^{-\beta \tau}\right], x\in \mathcal{D},
\end{equation}
where $\mathbb{E}_x[\cdot]:=\mathbb{E}[\cdot | X_0=x]=\mathbb{E}[\cdot | \mathcal{F}_0]$. Then refer to \cite{peskir2006optimal} $f(x,\beta)$ is the unique solution to the following ODE
\begin{equation}\label{eqn:FPTode}
\OperatorA f(x)=\beta f(x),\ x\in\mathcal{D}
\end{equation}
with Dirichlet type boundary conditions
\begin{equation}\label{eqn:boundaries}
f(\partial \mathcal{D})=(1,0)^T.
\end{equation}

\begin{rmk}
The result in above follows from the killed version of potential theory. Note that the strong Markov property and continuity over all stopping times are crucial for representing the unique solution from the Dirichlet problem by \eqref{eqn:XtLT}.
\end{rmk}

\begin{rmk}
The boundary conditions \eqref{eqn:boundaries} imply the boundness in $f$, though the set $C^2$ does not require the boundness explicitly. Therefore by constructing a proper martingale we can use Feynman-Kac theorem to deduce similar conclusion. However, in order to employ the optimal sampling theorem the boundness-related properties in $\tau$ should be further demonstrated. 
\end{rmk}

\subsection{Direct Solution to the Dirichlet Problem}
Let
\begin{equation}\label{eqn:substuts}
\begin{cases}
m:={\frac {
\sqrt { {c}^{2}{\epsilon}^{2}+2 \beta}- \epsilon c}{
2\alpha}},\\
n:={\frac {\sqrt {{c}^{2}{\epsilon}^{2}+2
\beta}+ \alpha}{\alpha}},\\
\psi:={\frac {\epsilon {{ e}^{-2 \alpha x
}}}{\alpha}},\\
\lambda := x(\epsilon c-
\sqrt { {c}^{2}{\epsilon}^{2}+2 \beta}).
\end{cases}
\end{equation}
Refer to \cite{abramowitz1964handbook}. The solution of \eqref{eqn:FPTode}, by substituting \eqref{eqn:generator} into, is given by
\begin{equation}\label{eqn:explicitFPTsolution}
f(x)=C_1e^{\lambda }M\left(m,n,\psi\right)+ C_2 e^{\lambda}U\left(m,n,\psi\right),
\end{equation}
where $M(m,n,\psi)$ and $U(m,n,\psi)$ are solutions to the Kummer's equation \cite{daalhuis2010confluent} 
\[
\psi u^{''}(\psi)+(n-\psi)u^{'}(\psi)=a u(\psi).
\]
Now determine the constants $C_1$ and $C_2$. Consider the hitting from below case, i.e. the boundary is taken on $\partial \mathcal{D}^l$. Substituting $x=-\infty$ we see
\[
\psi=+\infty
\]
and 
\[
\lambda = +\infty
\]
as $\epsilon c-\sqrt { {c}^{2}{\epsilon}^{2}+2 \beta}<0$. Refer to \cite{lozier2003nist}, the asymptotic of $U\left(m,n,\psi\right)$ for large $\psi$ is given by 
\[
U\left(m,n,\psi\right)\sim \psi^{-m},\ \psi\uparrow +\infty.
\]
Although by $m>0$, $U\left(m,n,\psi\right)$ converges to $0$ for large $\psi$, $e^\lambda U\left(m,n,\psi\right)$ still diverges when $\lambda \uparrow +\infty$. On the other hand, referring to \cite{lozier2003nist} again we have
\[
M\left(m,n,\psi\right)\sim \frac{e^{\psi}\psi^{m-n}}{\Gamma(m)},\ \psi\uparrow +\infty.
\]
Therefore the limit value at $x=-\infty$ does not exist for either $e^\lambda M\left(m,n,\psi\right)$ or $e^\lambda U\left(m,n,\psi\right)$. The unique solution for the hitting from below case then becomes
\[
f(x)\equiv 0.
\]
This indicates the LT for the upward FPT does not really exist. 

On the other hand, consider the solution to FPT from above, i.e. a downward hitting time that the boundary is taken on $\partial \mathcal{D}^u$. By substituting $x=+\infty$ we get
\[
\psi=0\ \text{and}\ \lambda = -\infty.
\]
Refer to \cite[Section 13.2 (iii)]{lozier2003nist}, depending on the choices of $\beta$ the limit of $U\left(m,n,\psi\right)$ at $\psi = 0^{+}$ has various versions. In order to guarantee the uniqueness of solution\footnote{The parameter $\beta$ is involved in the LT. We want a function of solution $f(x)$ that is unique in functional forms to all $\beta\ge 0$.} we set $C_2=0$. For the limit of $M\left(m,n,\psi\right)$ we have
\[
M\left(m,n,\psi\right)=1+O(\psi),\ \psi\downarrow 0.
\]
Therefore the $+\infty$ boundary gives the solution 
\[
f(x)=C_1 e^\lambda M\left(m,n,\psi\right).
\]
Consider the boundary condition on $x=a$. We write
\[
\begin{cases}
\hat{\psi}:={\frac {\epsilon {{ e}^{-2 \alpha a
}}}{\alpha}},\\
\hat{\lambda}:= a(\epsilon c-
\sqrt { {c}^{2}{\epsilon}^{2}+2 \beta}).
\end{cases}
\]
Then $f(a)=1$ gives 
\begin{equation}\label{enq:actTransformSolution}
f(x)=\frac{e^{\lambda}M\left(m,n,\psi\right)}{e^{\hat{\lambda}}M\left(m,n,\hat{\psi}\right)}.
\end{equation}

\begin{rmk}
As indicated by our analysis, only a downward LT for the present problem exists. In practice we are more interested in the burst time of an economic bubble rather than predicting how record-high would the bubble visit. Therefore the missing solution in upward LT would be a minor issue. 
\end{rmk}


Equation \eqref{enq:actTransformSolution} shows the LT for the downward first hitting time. Due to the special function it is difficult to find the explicit inverse transform. For numerical inversion schemes we refer to \cite{abate2006unified}, where three efficient algorithms are provided. However, considering the complicated functional form, it can be imagined that the speed and accuracy in the numerical inverse may not be desired. 

\subsection{Perturbed FPTD}\label{sec43}
The earliest and most successful application of the perturbation technique could be traced back to in finding the solutions of the Schrodinger equation for Hamiltonians of even moderate complexity \cite{schrodinger1926quantisierung,soliverez1981general}. In mathematical finance, perturbation theory has been studied extensively; see \cite{fouque2011multiscale, dassios2010perturbed, duck2009singular}. Inspired by A. Dassios and S. Wu \cite{dassios2010perturbed}, J. Fouque et al. \cite{fouque2011multiscale}, we apply perturbations on the mean-reversion parameter $\epsilon$ and find the closed-form density for the downward FPT of $\set{X_t}_{t\ge 0}$.

\emph{W.l.o.g.} we let $a=0$ and the FPT problem is defined on $\mathcal{D}^{u}$. Consider the function $f\in C^2$. Assume there exists a sequence of $C^2$ functions $\set{f_i}_{i\ge 0}$, such that
\begin{equation}\label{eqn:perturbation}
f=\sum_{i=0}^{\infty}\epsilon^i f_i.
\end{equation}
Substitute \eqref{eqn:perturbation} into \eqref{eqn:FPTode}:
\begin{equation}\label{eqn:ptbsubs}
\sum_{i=0}^\infty\epsilon^i \OperatorA f_i=\sum_{i=0}^{\infty }\epsilon^i \beta f_i.
\end{equation}
For any $f\in C^2$, introduce
\[
\OperatorG f(x):=\frac{1}{2}f^{''}(x).
\]
Rearranging the terms in \eqref{eqn:ptbsubs} we have
\[
\OperatorG f_0-\beta f_0+\sum_{i=1}^{\infty}\epsilon^i\left(\OperatorG f_i-\beta f_i +(e^{-2\alpha x}-c)f_{i-1}^{'}\right)=0.
\]
By assigning proper boundary conditions we split the original Dirichlet problem \eqref{eqn:FPTode} and \eqref{eqn:boundaries} into recursive representations:
\begin{equation}\label{eqn:o1}
o(1):\ \OperatorG f_0-\beta f_0=0,\ f_0(\partial D)=(1,0)^T,
\end{equation}
and for $i\ge 1$
\begin{equation}\label{eqn:o2}
o(\epsilon^i):\ \OperatorG f_i-\beta f_i +(e^{-2\alpha x}-c)f_{i-1}^{'}=0,\ f_i(\partial D)=(0,0)^T.
\end{equation}

\begin{rmk}\label{rmk:45}
The $o(1)$ problem in fact is the corresponding boundary value problem for the downward FPT of Brownian motion. Introduce 
\[
\tau_W:=\inf\set{t\ge 0: W_t=0|W_t=x>0}.
\]
Then $f_0(x)=\mathbb{E}_x\left[e^{-\beta \tau_W}\right]$. In addition, for $i\ge 1$ the function $f_i$ embeds the following representation \cite{peskir2006optimal}
\[
f_i(x)=\mathbb{E}_x\left[\int_0^{\tau_W}e^{-\beta s}\left(e^{-2\alpha W_s}-c\right)f_{i-1}^{'}(W_s)ds\right].
\]
According to \cite{peskir2006optimal} the solutions for $o(1)$ and $o(\epsilon^i)$, $i\ge 1$, are unique. Therefore the existence of $\set{f_i}_{i\ge 1}$ is guaranteed and the perturbation representation \eqref{eqn:perturbation} is valid.
\end{rmk}

In the present paper we solve the recursive system up to $i=1$ and provide the $o(\epsilon)$-accurate FPTD estimation. Referring to \cite{borodin2012handbook}, for $o(1)$ we have
\begin{equation}\label{eqn:bmFPTLT}
f_0(x)=e^{-\gamma x},
\end{equation}
where $\gamma:=\sqrt{2\beta}$. Further let $f_1=f_0g_1$. Then solving $o(\epsilon)$ we get\begin{equation}
\label{fptabove}
g_1(x)=\frac {\gamma \left( {{ e}^{-2\alpha x}}-1 \right) }{
2\alpha\left( \gamma+\alpha \right) }+cx.
\end{equation}

\begin{prop}\label{cor42}
Let $\tau^*$ to be the first order approximation of $\tau$. Then the FPTD of $\tau^{*}$ is given by
\[
\mathbb{P}^1_x(\tau^*\in dt)=\left(1+\epsilon \left(cx+ \frac{\left(1-e^{-2\alpha x}\right)(\alpha t-x)}{2\alpha x}\right)\right)p_0(t)-\epsilon \frac{\alpha}{4}\left(1-e^{-2\alpha x}\right){{e}^{\alpha x \left(  \frac{\alpha t}{2x}+1 \right) }}{
\text{Erfc}}\left(\frac{x}{\sqrt{2t}}+\alpha \sqrt {\frac{t}{2}}
 \right),
\] 
where $\mathbb{P}_x(\cdot)=\mathbb{P}(\cdot|X_0=x)=\mathbb{P}(\cdot|\mathcal{F}_0)$ and $p_0(t)$ is the downward FPTD for Brownian motion
$$
p_0(t)=\frac{x}{\sqrt{2\pi}}t^{-\frac{3}{2}}e^{-\frac{x^2}{2t}}.
$$
$\text{Erfc}(\cdot)$ is the complementary error function given by
\[
\text{Erfc}(z)=\frac{2}{\sqrt{\pi}}\int_z^\infty e^{-y^2}dy.
\]
\end{prop}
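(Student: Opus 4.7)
The plan is to invert in $\beta$ the Laplace transform
\[
f^{*}(x,\beta):=f_0(x)+\epsilon f_1(x)=e^{-\gamma x}\left[1+\epsilon cx+\epsilon\,\frac{\gamma(e^{-2\alpha x}-1)}{2\alpha(\gamma+\alpha)}\right],
\]
obtained from \eqref{eqn:bmFPTLT} and \eqref{fptabove} with $\gamma=\sqrt{2\beta}$. Once inverted in $\beta$, the result is $\prob^{1}_x(\tau^{*}\in dt)$. The fundamental pairing is the Brownian FPT correspondence $e^{-\gamma x}\leftrightarrow p_0(t)$; the first two summands multiply $e^{-\gamma x}$ only by $\beta$-independent constants, hence invert immediately to $p_0(t)$ and $\epsilon cx\,p_0(t)$. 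All of the genuine work is concentrated in the third summand.

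To expose its structure I would first perform the algebraic split $\frac{\gamma}{\gamma+\alpha}=1-\frac{\alpha}{\gamma+\alpha}$, so that
\[
\frac{\gamma(e^{-2\alpha x}-1)}{2\alpha(\gamma+\alpha)}\,e^{-\gamma x}=\frac{e^{-2\alpha x}-1}{2\alpha}\,e^{-\gamma x}+\frac{1-e^{-2\alpha x}}{2}\cdot\frac{e^{-\gamma x}}{\gamma+\alpha}.
\]
The first piece is again a $\beta$-independent multiple of $e^{-\gamma x}$ and therefore contributes $\frac{e^{-2\alpha x}-1}{2\alpha}p_0(t)$ to the density. Inverting $\frac{e^{-\gamma x}}{\gamma+\alpha}$ is the actual obstacle.

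The main step is this last inverse Laplace transform. I would handle it by writing $\frac{1}{\gamma+\alpha}=\int_0^{\infty}e^{-(\gamma+\alpha)s}ds$, inserting this representation into the product, swapping the order of integration, and applying the Brownian FPT identity at each shifted starting level $x+s$. After substituting $u=x+s$ and completing the square in the Gaussian, the integral splits cleanly into an elementary exponential term and a Gaussian tail, producing
\[
\ILT\left[\frac{e^{-\gamma x}}{\gamma+\alpha}\right](t)=\frac{e^{-x^{2}/(2t)}}{\sqrt{2\pi t}}-\frac{\alpha}{2}\,e^{\alpha x+\alpha^{2}t/2}\,\text{Erfc}\!\left(\frac{x+\alpha t}{\sqrt{2t}}\right).
\]
This identity is where all the analytic content lives; everything else is bookkeeping.

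Finally I would assemble the pieces. Using $\frac{e^{-x^{2}/(2t)}}{\sqrt{2\pi t}}=\frac{t}{x}p_0(t)$ to recast the Gaussian density as a multiple of $p_0(t)$, and then collecting the coefficients of $p_0(t)$ from the contributions $\epsilon cx$, $\epsilon\frac{e^{-2\alpha x}-1}{2\alpha}$ and $\epsilon\frac{(1-e^{-2\alpha x})t}{2x}$, the coefficient bracket simplifies to $1+\epsilon\bigl(cx+\frac{(1-e^{-2\alpha x})(\alpha t-x)}{2\alpha x}\bigr)$. The surviving $\text{Erfc}$ term matches the stated expression after noting $\alpha x+\frac{\alpha^{2}t}{2}=\alpha x\bigl(\frac{\alpha t}{2x}+1\bigr)$ and $\frac{x+\alpha t}{\sqrt{2t}}=\frac{x}{\sqrt{2t}}+\alpha\sqrt{t/2}$, which gives exactly the announced $\prob^{1}_x(\tau^{*}\in dt)$.
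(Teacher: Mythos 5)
Your proposal is correct and reaches exactly the stated density, but the key inversion is handled by a genuinely different route than the paper's. Both proofs invert the perturbed transform $f^1(x)=e^{-\gamma x}\left(1+\epsilon cx\right)+\epsilon e^{-\gamma x}\frac{\gamma(e^{-2\alpha x}-1)}{2\alpha(\alpha+\gamma)}$ term by term using $e^{-\gamma x}\leftrightarrow p_0(t)$; the difference is the treatment of the awkward factor $\frac{\gamma}{\gamma+\alpha}$. The paper rewrites the whole second term as $\frac{e^{-2\alpha x}-1}{2\alpha}\,\tilde{l}_1(2x^2\beta)$ with $\tilde{l}_1(\beta)=\frac{\sqrt{\beta}\,e^{-\sqrt{\beta}}}{\alpha x+\sqrt{\beta}}$, quotes the tabulated inverse of $\tilde{l}_1$ from Bateman's tables, and undoes the argument scaling via $\ILT\set{\frac{1}{c}\tilde{l}_1(\beta/c)}(t)=\ILT\set{\tilde{l}_1(\beta)}(ct)$. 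You instead split $\frac{\gamma}{\gamma+\alpha}=1-\frac{\alpha}{\gamma+\alpha}$, which isolates one more multiple of $p_0(t)$, and then derive $\ILT\brcksqr{\frac{e^{-\gamma x}}{\gamma+\alpha}}(t)=\frac{e^{-x^2/(2t)}}{\sqrt{2\pi t}}-\frac{\alpha}{2}e^{\alpha x+\alpha^2 t/2}\,\text{Erfc}\brck{\frac{x+\alpha t}{\sqrt{2t}}}$ from first principles by writing $\frac{1}{\gamma+\alpha}=\int_0^\infty e^{-(\gamma+\alpha)s}ds$, recognising $e^{-\gamma(x+s)}$ as the Brownian FPT transform from level $x+s$, and completing the Gaussian square; I have checked this identity and your final bookkeeping (including $\frac{e^{-x^2/(2t)}}{\sqrt{2\pi t}}=\frac{t}{x}p_0(t)$ and the recombination into $\frac{(1-e^{-2\alpha x})(\alpha t-x)}{2\alpha x}$), and they are right. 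What your route buys is self-containedness and a probabilistic reading of the correction as an exponentially weighted average of Brownian FPT densities started from levels $x+s$, $s>0$, avoiding any table lookup; what the paper's route buys is brevity, since the inversion reduces to a citation plus a scaling property. The only point you should make explicit is the interchange of the inverse Laplace transform with the $\int_0^\infty ds$ integral: since the integrand $e^{-\alpha s}\,\frac{x+s}{\sqrt{2\pi t^3}}e^{-(x+s)^2/(2t)}$ is nonnegative and integrable, a one-line Fubini argument (or, equivalently, taking the forward transform of your claimed inverse and invoking uniqueness of the Laplace transform) closes this gap; you should also note that $x>0$ is in force because the downward problem has $x\in\mathcal{D}^u$ with $a=0$, which is what legitimises $e^{-\gamma(x+s)}$ as an FPT transform for every $s\ge 0$.
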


\begin{proof}
Refer to \eqref{eqn:perturbation}, \eqref{eqn:bmFPTLT} and \eqref{fptabove}. The first order perturbed LT of the downward FPT is given by
\begin{equation}\label{eqn:totaltransform}
f^1(x)=f_0(x)\left(1+\epsilon cx\right)+\epsilon f_0(x) \frac{\gamma \left(e^{-2\alpha x}-1\right)}{2\alpha(\alpha+\gamma)}.
\end{equation}
Note that $\gamma=\sqrt{2\beta}$. Therefore $f^1(x)$ is a function of $\beta$ as well. To emphasize the transform parameter we denote by $f^1(\beta)$ and $f_0(\beta)$ respectively. As mentioned by Remark \ref{rmk:45}, $f_0(\beta)$ is nothing but the LT for the downward FPT of Brownian motion. According to \cite{bateman1954tables} this gives
\[
p_0(t):=\ILT\set{f_0(\beta)}(t)=\FPTBM.
\]

Now consider the inverse transform for the second term in \eqref{eqn:totaltransform}. Define 
\begin{equation}\label{sTrans}
\tilde{l}_1(\beta):=\frac{\sqrt{\beta} e^{-\sqrt{\beta}}}{\alpha x+\sqrt{\beta}}.
\end{equation}
Then the second term can be re-written as 
\begin{equation}
\label{strans2}
 f_0(\beta) \frac{\gamma \left(e^{-2\alpha x}-1\right)}{2\alpha(\alpha+\gamma)}=\frac{e^{-2\alpha x}-1}{2\alpha } \cdot \tilde{l}_1(2x^2 \beta).
\end{equation}
Refer to \cite{bateman1954tables}. The inverse of \eqref{sTrans} is given by
\begin{equation}\label{invtrans2}
\ILT\set{\tilde{l}_1(\beta)}(t)={\alpha}^{2}{x}^{2}{{ e}^{\alpha x \left( \alpha x t+1 \right) }}{
 \text{Erfc}} \left( {\frac {1}{2\sqrt {t}}}+\alpha x\sqrt {t}
 \right) -{\frac {2 \alpha x t-1}{2\sqrt {\pi}}{t}^{-\frac{3}{2}}{{ e}^
{-\frac{1}{4t}}}}.
\end{equation}
According to the property of inverse Laplace transform \cite{oberhettinger2012tables}, for constant $c$ 
\[
\ILT \set{\frac{1}{c}\tilde{l}_1\left(\frac{\beta}{c}\right)}(t)=\ILT\set{\tilde{l}_1(\beta)}(ct).
\]
So let $c=\frac{1}{2x^2}$ we have
\begin{equation}\label{eqn:secondPart}
\ILT\set{\tilde{l}_1(2x^2\beta)}(t)=\frac{1}{2x^2}\ILT\set{\tilde{l}_1(\beta)}\left(\frac{t}{2x^2}\right).
\end{equation}
Summarizing \eqref{strans2}, \eqref{invtrans2}, \eqref{eqn:secondPart} gives the inverse transform for the second term in \eqref{eqn:totaltransform}. This concludes the proof.
\end{proof}

\begin{rmk}\label{rmk:47}
As an approximation, the first order perturbation provides a continuous function but not necessarily a valid probability density function. In fact,
\[
\mathbb{E}_x\left[\tau^*\right]=\lim_{\beta \downarrow 0}f^{1}(\beta)=1+\epsilon  c x.
\]
In the case $c>0$ the first order perturbation would provide an extra tiny probability by $\epsilon cx$. We will discuss the accuracy issue in the later proposition.
\end{rmk}

Follow directly with Proposition \ref{cor42}. The tail asymptotics and probability distribution of running minimum are given explicitly.

\begin{cor}\label{cor:asymptotics}
The tail asymptotics for $\mathbb{P}^1_x\left(\tau^{*}\in dt\right)$ are given by
\begin{equation}
\tag{\text{Left Tail Asymptotics}}
\mathbb{P}^1_x(\tau^{*}\in dt)\sim \left(1+\epsilon \left(cx- \frac{1-e^{-2\alpha x}}{2\alpha}\right)\right)p_0(t)\sim p_0(t),\ t\downarrow 0^{+},
\end{equation}
and
\begin{equation}
\tag{\text{Right Tail Asymptotics}}
\mathbb{P}^1_x(\tau^{*}\in dt)\sim \left(1+\epsilon \left(cx+\frac{(1-\alpha x)\left(1-e^{-2\alpha x}\right)}{2\alpha ^2 x}\right)\right)p_0(t)\sim p_0(t),\ t\uparrow +\infty.
\end{equation}
\end{cor}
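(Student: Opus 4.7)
The plan is to substitute the large-argument asymptotic expansion
\[
\text{Erfc}(z) = \frac{e^{-z^2}}{z\sqrt{\pi}}\left(1 - \frac{1}{2z^2} + O(z^{-4})\right),\quad z \to +\infty,
\]
into the explicit density of Proposition \ref{cor42}, evaluated at $z = \frac{x}{\sqrt{2t}} + \alpha\sqrt{t/2} = \frac{x + \alpha t}{\sqrt{2t}}$. Since $x, \alpha > 0$, we have $z \to +\infty$ in both limits $t \downarrow 0^+$ (where $z \sim x/\sqrt{2t}$) and $t \uparrow +\infty$ (where $z \sim \alpha\sqrt{t/2}$), so the expansion is valid throughout. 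The key algebraic identity $z^2 = x^2/(2t) + \alpha x + \alpha^2 t / 2$ lets the prefactor $e^{\alpha x(\alpha t/(2x) + 1)} = e^{\alpha x + \alpha^2 t/2}$ cancel $e^{-z^2}$ exactly, so that, writing $A := 1 - e^{-2\alpha x}$, the Erfc-based contribution reduces to
\[
-\epsilon\frac{\alpha A}{4}\,e^{\alpha x + \alpha^2 t / 2}\,\text{Erfc}(z) \sim -\epsilon\frac{\alpha A}{4}\cdot \frac{e^{-x^2/(2t)}}{z\sqrt{\pi}}\left(1 - \frac{1}{2z^2} + \cdots\right),
\]
sharing the Gaussian weight $e^{-x^2/(2t)}$ with $p_0(t)$.

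For the left tail, substituting $z \sim x/\sqrt{2t}$ shows the Erfc-based term is of order $\sqrt{t}\,e^{-x^2/(2t)}$, i.e.\ a factor of $t^2$ smaller than $p_0(t) \sim (x/\sqrt{2\pi})\, t^{-3/2}\,e^{-x^2/(2t)}$, and hence negligible. The polynomial prefactor $1 + \epsilon(cx + A(\alpha t - x)/(2\alpha x))$ evaluates at $t = 0$ to $1 + \epsilon(cx - A/(2\alpha))$, proving the first claimed left-tail asymptotic; the chained $\sim p_0(t)$ is the observation that the multiplicative factor is $1 + O(\epsilon)$, where $\epsilon$ is the small perturbation parameter.

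The right tail is more delicate. There $z \sim \alpha\sqrt{t/2}$, so the Erfc-based piece is only $O(t^{-1/2})$, which strictly dominates $p_0(t) \sim t^{-3/2}$; the polynomial prefactor itself contains a matching term $\epsilon A t\,p_0(t)/(2x)$, also $O(t^{-1/2})$. I would verify that these two leading contributions cancel exactly, and then continue the expansion one order further using
\[
\frac{1}{z} = \frac{\sqrt{2}}{\alpha\sqrt{t}}\left(1 - \frac{x}{\alpha t} + O(t^{-2})\right),\quad e^{-x^2/(2t)} = 1 - \frac{x^2}{2t} + O(t^{-2}),\quad 1 - \frac{1}{2z^2} = 1 - \frac{1}{\alpha^2 t} + O(t^{-2}),
\]
combined with the Taylor expansion of $p_0(t)$, then collect the $t^{-3/2}$ terms to read off the coefficient stated. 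The main obstacle is precisely this right-tail bookkeeping: four distinct expansions feed into the $t^{-3/2}$ coefficient, the leading $O(t^{-1/2})$ cancellation must be tracked scrupulously, and any miscount produces a wrong constant. The left tail, by contrast, reduces to a one-line observation about Gaussian suppression of the Erfc-based term.
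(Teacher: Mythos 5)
Your overall route is the same as the paper's: use the large-argument expansion of $\text{Erfc}$, exploit the exact identity $e^{\alpha x+\alpha^2 t/2}e^{-z^2}=e^{-x^2/(2t)}$ with $z=\frac{x+\alpha t}{\sqrt{2t}}$, show the $\text{Erfc}$ piece is Gaussian-suppressed relative to $p_0$ as $t\downarrow 0$, and, for $t\uparrow+\infty$, split the polynomial prefactor, check the $O(t^{-1/2})$ cancellation, and collect the $t^{-3/2}$ terms. The left-tail half of your argument is complete and correct (the paper invokes the cruder estimate $\text{Erfc}(z)\sim e^{-z^2}$ from its reference, but your refined expansion serves the same purpose).

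The gap is in the right tail, which you explicitly defer (``I would verify \dots then collect \dots to read off the coefficient stated''): if you carry out the bookkeeping with the expansions you yourself list, you do not land on the stated coefficient. Keeping $\frac{1}{z}=\frac{\sqrt{2}}{\alpha\sqrt{t}}\left(1-\frac{x}{\alpha t}+O(t^{-2})\right)$, the $\text{Erfc}$ piece contributes at order $t^{-3/2}$ not only $+\epsilon\frac{1-e^{-2\alpha x}}{2\alpha^{2}x}\,p_0(t)$ (from the $-\frac{1}{2z^{2}}$ factor) but also $+\epsilon\frac{1-e^{-2\alpha x}}{2\alpha}\,p_0(t)$ (from the $-\frac{x}{\alpha t}$ correction to $\frac{1}{z}$); the latter cancels exactly the $-\epsilon\frac{1-e^{-2\alpha x}}{2\alpha}$ produced by your split of the prefactor, so a consistent expansion yields the limiting factor $1+\epsilon\left(cx+\frac{1-e^{-2\alpha x}}{2\alpha^{2}x}\right)$ rather than $1+\epsilon\left(cx+\frac{(1-\alpha x)(1-e^{-2\alpha x})}{2\alpha^{2}x}\right)$. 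The paper's own proof reaches the displayed constant only because it substitutes the leading terms $\frac{1}{z}\approx\frac{\sqrt{2}}{\alpha\sqrt{t}}$ and $\frac{1}{2z^{2}}\approx\frac{1}{\alpha^{2}t}$, i.e.\ it discards a correction of exactly the order it is collecting. So the deferred bookkeeping is not routine: either you adopt the paper's leading-order substitution (inheriting that same-order truncation), or your more careful scheme produces a constant differing from the statement by $\epsilon\frac{1-e^{-2\alpha x}}{2\alpha}$. The weaker chained conclusion $\mathbb{P}^1_x(\tau^{*}\in dt)\sim p_0(t)$ up to a $1+O(\epsilon)$ factor survives either way, but as written your proposal cannot claim that its expansion ``reads off the coefficient stated'' without resolving this discrepancy.
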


\begin{proof}
The left tail asymptotic is given by calculations referring to \cite{chiani2003new}
\begin{equation}\label{eqn:asympTails}
\text{Erfc}\left(\frac{x}{\sqrt{2t}}+\alpha \sqrt {\frac{t}{2}}\right)\sim \exp\set{-\left(\frac{x}{\sqrt{2t}}+\alpha \sqrt{\frac{t}{2}}\right)^2},\ t\downarrow 0^+.
\end{equation}
Consider the right tail. Note that if we repeat using \eqref{eqn:asympTails}, the second term of $\mathbb{P}^1_x(\tau^{*}\in dt)$ will remain as a constant while the first term vanishes. This leads to a constant tail asymptotic for $t\uparrow +\infty$, however, $\mathbb{P}^1_x(\tau^{*}\in dt)\downarrow 0$ indeed. Refer to another fact \cite{olver2010nist} that
\begin{equation}\label{erfcAsym2}
\text{Erfc}(y)\sim \frac{e^{-y^2}}{y\sqrt{\pi}}\left(1-\frac{1}{2y^2}\right),\ y\uparrow +\infty.
\end{equation}
Also note the first term of $\mathbb{P}^1_x(\tau^{*}\in dt)$ can be re-expressed as
\begin{equation}\label{eqn:splitPt}
\left(1+\epsilon \left(cx+ \frac{\left(1-e^{-2\alpha x}\right)(\alpha t-x)}{2\alpha x}\right)\right)p_0(t)=\left(1+\epsilon \left(cx- \frac{\left(1-e^{-2\alpha x}\right)}{2\alpha }\right)\right)p_0(t)+\epsilon  \frac{\left(1-e^{-2\alpha x}\right) t}{2 x}p_0(t).
\end{equation}
Then substituting \eqref{erfcAsym2} and \eqref{eqn:splitPt} into $\mathbb{P}^1_x(\tau\in dt)$, we find as $t\uparrow +\infty$,
\begin{align*}
\mathbb{P}^1_x(\tau^{*}\in dt)\sim&  \left(1+\epsilon \left(cx- \frac{\left(1-e^{-2\alpha x}\right)}{2\alpha }\right)\right)p_0(t)+\epsilon\frac{\left(1-e^{-2\alpha x}\right)e^{-\frac{x^2}{2t}}}{2\sqrt{2\pi t}}\\
&-\epsilon \frac{\alpha\left(1-e^{-2\alpha x}\right)}{4}\cdot\frac{\sqrt{2}e^{-\frac{x^2}{2t}}}{\sqrt{\pi t}\alpha}+\epsilon \frac{\alpha\left(1-e^{-2\alpha x}\right)}{4}\cdot\frac{2\sqrt{2}e^{-\frac{x^2}{2t}}}{2\sqrt{\pi}\alpha^3 t^{\frac{3}{2}}}\\
=&\left(1+\epsilon \left(cx- \frac{\left(1-e^{-2\alpha x}\right)}{2\alpha }\right)\right)p_0(t)+0+\epsilon \frac{1-e^{-2\alpha x}}{2x\alpha^2}p_0(t).
\end{align*}
This completes the proof.
\end{proof}

\begin{rmk}
Corollary \ref{cor:asymptotics} indicates the FPTD of $\tau^{*}$ has the same tails as the FPTD of Brownian motion. We know Brownian motion is a null-recurrent Markov process. Therefore we may infer $\mathbb{E}_{x}\left[\tau^*\right]=+\infty$. Indeed, according to the first moment rule and by \eqref{eqn:totaltransform} we can check 
\[
\mathbb{E}_x\left[\tau^*\right]=-\frac{\partial f^1(\beta)}{\partial \beta}\bigg|_{\beta = 0}=+\infty.
\]
\end{rmk}

\begin{cor}\label{cor:runningMinimum}
For fixed $t\ge 0$ and $a<x$, denote the running minimum by
\[
X_t^*:=\min_{0\le u\le t}\set{X_u}.
\]
Also write $\mathbb{P}_x^1\left(\tau^{*}\in dt\right)$ with parameters $\epsilon,\ x,\ \alpha,\ c$ as
\[
p^1\left(t|\epsilon, x, \alpha, c\right).
\]
Then the first order perturbed distribution of $X_t^*$ is given by
\[
\mathbb{P}_x^1\left(X_t^*\le a\right)=\int_{0}^t p^1\left(u|\epsilon e^{-2\alpha a}, x-a, \alpha, c e^{2\alpha a}\right) du.
\]
\end{cor}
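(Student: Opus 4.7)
The plan is to reduce the running-minimum problem to a downward first passage time problem, then exploit the structural symmetry of SDE \eqref{eqnexp} under spatial translation to map the threshold $a$ back to the threshold $0$ used in Section \ref{sec43}.

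First, I would use path continuity of $\set{X_t}_{t\ge 0}$ (which follows from the explicit form in Proposition \ref{prop41}) to write the standard equivalence
\[
\set{X_t^*\le a}=\set{\tau_x^a\le t},
\]
so that $\mathbb{P}_x^1(X_t^*\le a)=\int_0^t \mathbb{P}_x^1(\tau_x^a\in du)\,du$ once we identify the density of $\tau_x^a$. The FPT density to an arbitrary level $a$ was not yet computed in Proposition \ref{cor42}; only the case $a=0$ was treated, as stipulated at the start of Section \ref{sec43}.

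Next, the key step is a translation argument. Define $Y_t:=X_t-a$ with $Y_0=x-a>0$. Substituting $X_t=Y_t+a$ into SDE \eqref{eqnexp} gives
\[
dY_t=\epsilon\brck{e^{-2\alpha(Y_t+a)}-c}dt+dW_t=\epsilon e^{-2\alpha a}\brck{e^{-2\alpha Y_t}-c e^{2\alpha a}}dt+dW_t,
\]
which is again an instance of \eqref{eqnexp}, this time with parameters $\tilde{\epsilon}=\epsilon e^{-2\alpha a}$, $\tilde{\alpha}=\alpha$ and $\tilde{c}=c e^{2\alpha a}$ (note that $\tilde c\in[0,1]$ is automatic whenever $a\le \frac{1}{2\alpha}\ln(1/c)$, and outside this range the result is still a formal density as in Remark \ref{rmk:47}). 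Because the threshold for $Y_t$ is $0$ and the starting point is $x-a$, the downward FPT $\tau_x^a$ for $X_t$ coincides in law with the downward FPT to $0$ of a new process of exactly the type analysed in Section \ref{sec43}.

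Finally, applying Proposition \ref{cor42} to $Y_t$ with the parameter triple $(\tilde\epsilon,\tilde\alpha,\tilde c)$ and starting point $x-a$ yields the first-order perturbed density
\[
\mathbb{P}_x^1(\tau_x^a\in du)=p^1\brck{u\,\big|\,\epsilon e^{-2\alpha a},x-a,\alpha,c e^{2\alpha a}}\,du,
\]
and integrating in $u$ from $0$ to $t$ gives the claimed formula. I expect no real obstacle here: the main non-trivial point is simply that the perturbation expansion in $\epsilon$ used in Section \ref{sec43} is preserved under the translation, which is clear because the expansion was generated directly from the infinitesimal generator and that generator is covariant under $x\mapsto x-a$ with the parameter relabelling $(\epsilon,c)\mapsto(\epsilon e^{-2\alpha a},c e^{2\alpha a})$.
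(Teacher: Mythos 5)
Your proposal is correct and follows essentially the same route as the paper: translate by $a$ via $Y_t=X_t-a$, observe that $Y_t$ again satisfies SDE \eqref{eqnexp} with parameters $(\epsilon e^{-2\alpha a},\alpha,ce^{2\alpha a})$ and starting point $x-a$, identify the running-minimum event with the downward FPT to $0$, and apply Proposition \ref{cor42}. Your extra remark on when $\tilde{c}\in[0,1]$ holds is a small addition the paper does not make, but it does not change the argument.
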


\begin{proof}
Introduce $\set{Y_t}_{t\ge 0}$ such that
\[
Y_t=X_t-a,\ \forall t\ge 0.
\]
The SDE of $Y_t$ is given by
\begin{equation}\label{eqn:YTSDEANOTHER}
dY_t=e^{-2\alpha a}\epsilon\brck{e^{-2\alpha Y_t}-ce^{2\alpha a}}dt+dW_t,\ Y_0=y=x-a.
\end{equation}
Similarly define $Y_t^*:=\min_{0\le u\le t}\set{Y_u}$. Then
\begin{equation}\label{eqn:Euilvalence1}
\mathbb{P}_x\brck{X_t^*\le a}=\mathbb{P}_{y}\brck{Y_t^*\le 0}.
\end{equation}
On the other hand, let $\tau$ to be the FPT of $Y_t$ from $y$ to $0$. Note the fact that
\begin{equation}\label{eqn:equivalienceTRSNAFORM}
\mathbb{P}_y\brck{Y_t^*\le 0}=\mathbb{P}_y\brck{\tau\le t}.
\end{equation}
So substituting the parameters in \eqref{eqn:YTSDEANOTHER} into Proposition \ref{cor42}, and considering the equivalence between \eqref{eqn:equivalienceTRSNAFORM} and \eqref{eqn:Euilvalence1}, we prove the result.
\end{proof}

Now we consider the accuracy of the perturbation estimation. Denote the actual FPTD of $\tau$ by
\[
\mathbb{P}_x\left(\tau\in dt\right).
\]
Let the absolute error to be denoted by
\begin{equation}\label{eqn:errorfunction}
q_\tau(t):=\mathbb{P}_x\left(\tau\in dt\right) - \mathbb{P}^1_x\left(\tau^*\in dt\right).
\end{equation}
Then we show $q_\tau(t)$ is $o(\epsilon)$-accurate.

\begin{prop}\label{prop:411}
For any $t>0$, there exists a constant $M>0$ such that 
\[
\left|q_\tau(t)\right|\le  M\epsilon^2.
\]
Moreover, the probabilistic representation of $q_\tau(t)$ is given by
$$
q_\tau(t)=\epsilon^2\mathbb{E}_x\left[\int_0^{t\wedge \tau}\left(e^{-2\alpha X_u}-c\right)\eta(t-u,X_u)du\right],
$$
where 
\[
\eta(t,x)=-\frac{\alpha^2 \cosh(\alpha x)}{2}M_1(t,x)+\frac{1-e^{-2\alpha x}}{2\sqrt{2\pi}\alpha }M_2(t,x)+\frac{e^{-2\alpha x}}{\sqrt{2\pi}}M_3(t,x)+c\left(2-\frac{x^2}{t}\right)p_0(t)
\]
with
\[
\begin{cases}
M_1(t,x)=\text{Erfc}\left(\frac{x}{\sqrt{2t}}+\alpha  \sqrt{\frac{t}{2}}\right)e^{\frac{\alpha^2}{2}t}\\
M_2(t,x)=e^{-\frac{x^2}{2t}}\left[\alpha^2 t^2 -(\alpha x+1)t+x^2\right]t^{-\frac{5}{2}}\\
M_3(t,x)=e^{-\frac{x^2}{2t}}\left(\alpha t -x\right)t^{-\frac{3}{2}}
\end{cases}.
\]
\end{prop}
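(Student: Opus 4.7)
The plan is to expand the Laplace transform of $\tau$ to second order in $\epsilon$, identify the remainder as the solution of an inhomogeneous Dirichlet problem, apply the killed Feynman--Kac representation, and then invert the transform term by term. Concretely, write $f(x,\beta)=\E_x[e^{-\beta\tau}]=f_0(x,\beta)+\epsilon f_1(x,\beta)+\epsilon^2 h(x,\beta)$, with $f_0,f_1$ defined by \eqref{eqn:o1}--\eqref{eqn:o2}. Substituting into the Dirichlet problem \eqref{eqn:FPTode}--\eqref{eqn:boundaries}, decomposing $\OperatorA=\OperatorG+\epsilon(e^{-2\alpha x}-c)\partial_x$, and cancelling the $o(1)$ and $o(\epsilon)$ cascades already satisfied by $f_0,f_1$, the only surviving $\epsilon^2$ contribution forces
\[
(\OperatorA-\beta)h(x,\beta)=-(e^{-2\alpha x}-c)f_1'(x,\beta),\qquad h(\partial\mathcal{D}^u)=(0,0)^T.
\]
Since $\OperatorA$ is the generator of $\{X_t\}$ and $\tau<\infty$ a.s.\ (Corollary \ref{cor:FPTExistence}), the killed-diffusion Feynman--Kac representation recalled in Remark \ref{rmk:45} gives
\[
h(x,\beta)=\E_x\!\left[\int_0^{\tau}e^{-\beta s}(e^{-2\alpha X_s}-c)\,f_1'(X_s,\beta)\,ds\right].
\]

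Next, I invert in $\beta$. The shift rule $\ILT\{e^{-\beta s}\phi(\beta)\}(t)=(\ILT\phi)(t-s)\mathbbm{1}_{\{s<t\}}$ together with Fubini gives
\[
q_\tau(t)=\epsilon^2\,\E_x\!\left[\int_0^{t\wedge\tau}(e^{-2\alpha X_u}-c)\,\eta(t-u,X_u)\,du\right],\qquad \eta(t,x):=\ILT\{f_1'(x,\beta)\}(t).
\]
Differentiating \eqref{fptabove} in $x$, with $\gamma=\sqrt{2\beta}$, decomposes $f_1'(x,\beta)$ into the elementary fragments $c\,e^{-\gamma x}$, $-c\gamma x\,e^{-\gamma x}$, $\gamma^2/(\gamma+\alpha)\cdot e^{-\gamma x}$, and $\gamma(\gamma+2\alpha)/(\gamma+\alpha)\cdot e^{-(\gamma+2\alpha)x}$. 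Each fragment has a tabulated inverse \cite{bateman1954tables,oberhettinger2012tables}: the $\gamma/(\gamma+\alpha)$-type pieces produce the $\text{Erfc}$ expression $M_1(t,x)$ (exactly as in the computation leading to \eqref{invtrans2}); the $\gamma^k e^{-\gamma x}$ pieces yield $k$-th derivatives of $p_0$ and supply $M_2(t,x)$ and $M_3(t,x)$; and the $c\,e^{-\gamma x}$ and $-c\gamma x\,e^{-\gamma x}$ contributions recombine into $c(2-x^2/t)p_0(t)$ via $\ILT\{\gamma e^{-\gamma x}\}(t)=-\partial_x p_0(t)$. Grouping the $e^{-\gamma x}$ and $e^{-(\gamma+2\alpha)x}$ terms by their $\cosh(\alpha x)$ and $e^{-2\alpha x}$ symmetries (from $e^{-\gamma x}\pm e^{-(\gamma+2\alpha)x}=2e^{-(\gamma+\alpha)x}(\cosh\text{ or }\sinh)(\alpha x)$) finally condenses everything into the announced formula for $\eta$.

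For the absolute bound, the probabilistic representation yields $|q_\tau(t)|\le \epsilon^2\,\E_x\!\left[\int_0^{t\wedge\tau}(1+c)\,|\eta(t-u,X_u)|\,du\right]$, since $X_u>0$ on $\{u<\tau\}$ guarantees $|e^{-2\alpha X_u}-c|\le 1+c$. Each summand of $\eta(t-u,X_u)$ carries a Gaussian factor in $X_u/\sqrt{t-u}$ that tames the $(t-u)^{-k/2}$ singularity, so the inner integral is a.s.\ finite; uniformity in $\epsilon$ follows from a standard comparison of $\{X_u\}$ with Brownian motion started at $x$ on $\{u<\tau\}$ (e.g.\ via Girsanov, the drift $\epsilon(e^{-2\alpha X_u}-c)$ being bounded on that event). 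The principal obstacle is the Laplace-inversion bookkeeping: matching the $\text{Erfc}$ and Gaussian--polynomial pieces coming from each fragment of $f_1'$ with precisely the $\cosh(\alpha x)M_1$, $(1-e^{-2\alpha x})M_2/\alpha$, $e^{-2\alpha x}M_3$, $c(2-x^2/t)p_0$ structure of the stated $\eta$. A subsidiary technical point is checking that the integrable-but-singular behaviour of $\eta$ as $u\uparrow t$ does not spoil the uniform-in-$\epsilon$ estimate on the expectation.
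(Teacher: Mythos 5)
Your proposal takes essentially the same route as the paper's proof: you identify the exact remainder $Q=f-f^1=\epsilon^2 h$ as the unique solution of the inhomogeneous Dirichlet problem $(\OperatorA-\beta)Q=-\epsilon^2\left(e^{-2\alpha x}-c\right)f_1'$ with zero boundary data, represent it probabilistically via the killed Feynman--Kac/Dirichlet formula, and recover the time-domain formula with $\eta=\ILT\set{\partial_x f_1}$ through the shift rule and term-by-term inversion of the fragments of $f_1'$ --- exactly the paper's argument, which merely runs the Laplace-transform step in the opposite direction (transforming the candidate time-domain expression $\tilde q_\tau$ and matching it with $Q$ by uniqueness of the ODE solution and of the inverse transform). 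Your bound using $\left|e^{-2\alpha X_u}-c\right|\le 1+c$ and the Gaussian damping in $\eta$ plays the same role as the paper's pathwise estimate \eqref{eqn:boundnessCondition}, so there is no essential difference there either.
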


\begin{proof}
To emphasize the dual effects of $f_1$ as a function of $x,\beta$, denote by $f_1(\beta, x):=f_1(x)=f_1(\beta)$. Let $\eta(t,x):=\ILT\set{\frac{\partial}{\partial x}f_1(\beta,x)}(t)$ to be the inverse transform of $f_1^{'}(x)$. Consider using same tricks in the proof of Proposition \ref{cor42}. After standard calculations we show $\eta(t,x)$ embeds the explicit form as in above.

Now we prove the probabilistic representation and uniform boundness. Let $h(x):=(e^{-2\alpha x}-c)$. By the solution of $\eta(t,x)$ we know
\[
\lim_{t\uparrow +\infty}\left|h(x)\eta(t,x)\right|=0,\ \forall x\in\mathcal{D}^{u}.
\]
On the other hand, according to Corollary \ref{cor:FPTExistence}, $\mathbb{P}_x\left(\tau<+\infty\right)=1$. Therefore
\[
\lim_{t\uparrow +\infty}\int_0^{t\wedge \tau}\left|h(X_u)\eta(t-u,X_u)\right|du=\int_0^{ \tau}\lim_{t\uparrow +\infty}\left|h(X_u)\eta(t-u,X_u)\right|du=0.
\]
Since $\int_0^{t\wedge \tau}\left|h(X_u)\eta(t-u,X_u)\right|du$ is continuous on $t$ and 
\[
\lim_{t\downarrow 0}\int_0^{t\wedge \tau}\left|h(X_u)\eta(t-u,X_u)\right|du = 0,
\]
so there exists $M>0$ such that
\begin{equation}\label{eqn:boundnessCondition}
\int_0^{t\wedge \tau}\left|h(X_u)\eta(t-u,X_u)\right|du \le M,\ \forall t\ge 0,\text{ and }\set{X_u}_{0\le u\le t}\in \mathcal{D}^u.
\end{equation}
Let $\tilde{q}_\tau(t)$ defined by
\[
\tilde{q}_\tau(t)=\epsilon^{2}\mathbb{E}_x\left[\int_0^{ t\wedge\tau}h(X_u)\eta\left(t-u,X_u\right)du\right].
\]
By \eqref{eqn:boundnessCondition} we immediately have
\begin{equation}
\left|\tilde{q}_\tau(t)\right|\le M\epsilon ^2.
\end{equation}
For $\beta\in \mathbb{C}$ and $\text{Real}(\beta)\ge 0$, consider the Laplace transform of $\tilde{q}_\tau(t)$
\[
\LT\left\{\tilde{q}_\tau(t)\right\}(\beta)=\epsilon^{2}\int_0^\infty e^{-\beta t} \mathbb{E}_x\left[\int_0^{ t\wedge\tau}h(X_u)\eta\left(t-u,X_u\right)du\right]dt.
\]
Based on \eqref{eqn:boundnessCondition} and the dominated convergence theorem, we change the order of integral and expectation
\begin{equation}\label{eqn:LT1}
\LT\left\{\tilde{q}_\tau(t)\right\}(\beta)=\epsilon^{2}\mathbb{E}_x\left[\int_0^\infty \int_0^{ t\wedge\tau}e^{-\beta t} h(X_u)\eta\left(t-u,X_u\right)dudt\right].
\end{equation}
In addition, \eqref{eqn:boundnessCondition} also gives the Fubini's theorem so

\begin{align*}
\int_0^\infty \int_0^{ t\wedge\tau}e^{-\beta t}h(X_u)\eta\left(t-u,X_u\right)du dt&=\int_0^\infty \int_0^{\tau}\bold{1}_{\left\{u\le  t\right\}}e^{-\beta t}h(X_u)\eta\left(t-u,X_u\right)du dt\\
&= \int_0^{\tau}\int_0^\infty\bold{1}_{\left\{u\le  t\right\}}e^{-\beta t}h(X_u)\eta\left(t-u,X_u\right) dtdu\\
&=\int_0^\tau \LT\set{\bold{1}_{\left\{u\le  t\right\}}\eta\left(t-u,X_u\right)}(\beta)h(X_u) du.
\end{align*}
Note $\bold{1}_{\set{u\le t}}$ is the indicator function, which can also be written as the Heaviside step function $H_u(t)$. Consider the fact \cite{oberhettinger2012tables} that
\[
\LT\set{H_u(t)\eta(t-u,x)}(\beta)=e^{-\beta u}\LT\set{\eta(t,x)}(\beta),
\]
where by our notation
\[
\LT\set{\eta(t,x)}(\beta)=\LT\set{\ILT\set{\frac{\partial}{\partial x}f_1(\beta,x)}(t)}(\beta)=\frac{\partial}{\partial x}f_1(\beta,x).
\]
Therefore \eqref{eqn:LT1} can be re-expressed as
\begin{equation}
\label{LTform}
\LT\left\{\tilde{q}_\tau(t)\right\}(\beta)=\epsilon^{2}\mathbb{E}_x\left[\int_0^\tau e^{-\beta u}h(X_u)\frac{\partial }{\partial x}f_1(\beta,X_u)du\right].
\end{equation}

In the next step we show $\LT\left\{\tilde{q}_\tau(t)\right\}(\beta)$ indeed is the LT for the error function $q_{\tau}(t)$. Then the uniqueness of inverse LT concludes our proof. To see this, let 
\begin{equation}\label{eqn:Q1}
Q( \beta,x):=f(\beta,x)-f^1(\beta,x),
\end{equation}
where follow similar convention $f^1(\beta,x):=f^1(x)$ and $f^1(x)$ is as introduced in \eqref{eqn:totaltransform}. Note that $f(\beta,x)=\LT\set{\mathbb{P}_x\set{\tau\in dt}}(\beta)$ and $f^1(\beta,x)=\LT\set{\mathbb{P}^1_x\set{\tau^*\in dt}}(\beta)$. By the linearity of LT we therefore have
\begin{equation}\label{eqn:Q2}
Q( \beta,x)=\LT\left\{q_\tau(t)\right\}(\beta).
\end{equation}
As $f,f^1\in C^2$, so is $Q( \beta,x)\in C^2$. Apply the infinitesimal generator $\OperatorA$ on $Q(\beta,x)$. Then by \eqref{eqn:FPTode}, \eqref{eqn:o1} and \eqref{eqn:o2} with $i=1$, after standard calculations we get
\begin{equation}\label{eqn:ODEerror}
\mathcal{A}Q-\beta Q =-\epsilon^{2}h f_1^{'},\ x\in \mathcal{D}^u.
\end{equation}
Note \eqref{eqn:ODEerror} is an equation about $x$ and $f_1^{'}$ is a short for $\frac{\partial }{\partial x}f_1(\beta,x)$. Since $f$ and $f^1$ share the same boundary conditions, so the boundary condition of ODE \eqref{eqn:ODEerror} is given by
\begin{equation}\label{eqn:ODEerrorboundary}
Q\left(\partial \mathcal{D}^u\right)=(0,0)^T.
\end{equation}
According to \cite{peskir2006optimal}, the boundary value problem \eqref{eqn:ODEerror} and \eqref{eqn:ODEerrorboundary} has the following unique solution 
\[
Q(\beta,x)=\epsilon^{2}\mathbb{E}_x\left[\int_0^\tau e^{-\beta u}h(X_u)\frac{\partial }{\partial x}f_1(\beta,X_u)du\right].
\]
The uniqueness in ODE solution and the uniqueness in inverse LT indicates
\[
q_\tau(t)=\tilde{q}_\tau(t).
\]

\end{proof}

\begin{rmk}
Proposition \ref{prop:411} shows the error bound is uniformly valid on $t\ge 0$. When $\epsilon\downarrow 0^+$, the error converges to $0$. This is true as when $X_t\rightarrow W_t$ pathwisely, referring to Proposition \ref{cor42} we have $\mathbb{P}^1_x\left(\tau^*\in dt\right)\rightarrow p_0(t),\ \forall t\ge 0$.
\end{rmk}

\begin{rmk}
On the other hand, the conclusion in Proposition \ref{prop:411} does not restrict applying perturbation for $\epsilon >1$. In fact, an exact error function is given and we can estimate the error level via simulation. Even in the case that $\epsilon>1$, there are possibilities that 
\[
\left|\mathbb{E}_x\left[\int_0^{t\wedge \tau}\left(e^{-2\alpha X_u}-c\right)\eta(t-u,X_u)du\right]\right|<<\frac{1}{\epsilon^2},\ t\in (0,+\infty).
\]
\end{rmk}

\section{Model Implementation}
\subsection{Extended SDE with Constant Volatility}\label{sec51}
For practical purpose it is more interesting to take the volatility into account. We extend SDE \eqref{eqnexp} by adding a constant volatility $\sigma>0$:
\begin{equation}\label{eqn:eqxexpestended}
dX_t=\epsilon(e^{-2\alpha X_t}-c)dt+\sigma dW_t,\ X_0=x\in \mathbb{R}.
\end{equation}
Introduce the scaled version of $\set{X_t}_{t\ge 0}$ and define $\set{\tilde{X}_t}_{t\ge 0}$ as
\[
\tilde{X}_t:=\frac{X_t}{\sigma}.
\]
Then by setting $\tilde{\epsilon}:=\frac{\epsilon}{\sigma}$, $\tilde{\alpha}:=\alpha\sigma$ and $\tilde{x}=\frac{x}{\sigma}$, we see $\set{\tilde{X}}_{t\ge 0}$ indeed is the diffusion process described by SDE \eqref{eqnexp}: 
\[
d\tilde{X}_t=\tilde{\epsilon}\brck{e^{-2\tilde{\alpha}\tilde{X_t}}-c}dt+dW_t,\ \tilde{X}_0=\tilde{x}\in \mathbb{R}.
\]
In addition, for $a\le x$, by letting $\tilde{a}=\frac{a}{\sigma}$ the result of running minimum in Corollary \ref{cor:runningMinimum} can be extended accordingly.

\subsection{Model Calibration}
In this section we provide a calibration scheme for the extended SDE \eqref{eqn:eqxexpestended}. Denote the observations of asset prices $\set{P_t}_{t=0,1,...,N}$ by
\begin{equation}\label{eqn:logtrans}
P_t=P_0e^{\hat{X_t}},\ t=0,1,...,N.
\end{equation}
Then $\set{\hat{X_t}}_{t=0,...,N}$ represents the normalised log-price with $\hat{X}_0=0$. Let $\set{\hat{r}_t}_{t=1,...,N}$ to be the log-return of $\set{P_t}_{t=0,1,...,N}$. By definition we have
\begin{equation}\label{en:return}
\hat{r}_t=\hat{X}_t-\hat{X}_{t-1},\ t=1,...,N.
\end{equation}

Consider the calibration based on $\set{\hat{r}_t}_{t=1,...,N}$. Mathematically, there are 4 parameters to be decided. Therefore at least 4 different statistical quantities should be provided. A natural candidate is the first four moments of $\set{\hat{r}_t}_{t=1,...,N}$. However, on the one hand, as we discussed in Section \ref{sec2}, the bubble dynamic in different regimes could have totally different statistical behaviours. So global moments on the whole time-series may not be representative. On the other hand, from Proposition \ref{prop:density}, $\set{X_t}_{t\ge 0}$ has a very complicated probability density. Following the proposition, we cannot easily get the explicit expression even for the first moment. Instead of using traditional moments calibration, we provide an alternative scheme with the piecewise time-series under different bubble regimes.

Recall those three regimes of $\set{X_t}_{t\ge 0}$ in a bubble cycle, according to which we make the following assumptions:
\begin{itemize}
\item Regime I), displacement. During this period we assume $X_t\approx 0$. SDE \eqref{eqn:eqxexpestended} then can be simplified as
\begin{equation}\label{eqn: Regime I SDE}
dX_t\approx \epsilon(1-c) dt+\sigma dW_t.
\end{equation}
\item Regime II), boom. In this stage the dynamic follows SDE \eqref{eqn:eqxexpestended} but will visit the equilibrium level. Denote the level by $X^{R}$, we have
\begin{equation}\label{eqn:Regime II Drift}
e^{-2\alpha X^{R}}=c.
\end{equation}
\item Regime III) euphoria (\& profit taking). Within these two steps $X_t$ hits the record-high level. Assume $e^{-2\alpha X_t}\approx 0$ then SDE \eqref{eqn:eqxexpestended} degenerates to
\begin{equation}\label{eqn:Regime III SDE}
dX_t\approx -c\epsilon dt+\sigma dW_t.
\end{equation}
\end{itemize}  

Besides, we further assume each regime could be recognised from the data. Let $\set{0,1,...,t_1}$, $\set{t_1,...,t_2}$, $\set{t_2,...,t_3}$ to be the time periods for regimes I, II and III. Then denote the piecewise time-series in each regime by
\[
\hat{X}^I:=\left\{\hat{X}_t\right\}_{t=0,...,t_1},\ \hat{X}^{II}:=\left\{\hat{X}_t\right\}_{t=t_1,...,t_2},\ \hat{X}^{III}:=\left\{\hat{X}_t\right\}_{t=t_2,...,t_3}.
\] 
The corresponding time-series for log-returns are given by 
\[
\hat{r}^I:=\left\{\hat{r}_t\right\}_{t=1,...,t_1},\ \hat{r}^{II}:=\left\{\hat{r}_t\right\}_{t=t_1+1,...,t_2},\ \hat{r}^{III}:=\left\{\hat{r}_t\right\}_{t=t_2+1,...,t_3}.
\] 
Also assume that the equilibrium level is observable and denote the observation by
\[
\hat{X}^{R}.
\]

We now consider parameter estimates. Start with $\hat{\epsilon}$ and $\hat{c}$. The general idea is to take the expected log-returns from regimes I and III in to account. Let 
\[
\bar{r}^I:=\text{Mean}\left(\hat{r}^I\right) \text{ and } \bar{r}^{III}:=\text{Mean}\left(\hat{r}^{III}\right)
\]
to be the annualized sample averages of returns. By matching the sample means with theoretical expectations from $dX_t$ in \eqref{eqn: Regime I SDE} and \eqref{eqn:Regime III SDE}, we have the following equations 
$$
\begin{cases}
\hat{\epsilon}(1-\hat{c})=\bar{r}^I\\
-\hat{\epsilon}\hat{c}=\bar{r}^{III}
\end{cases}.
$$
Solving the equations we get
\begin{equation}
\label{epsilonc}
\begin{cases}
\hat{\epsilon}=\bar{r}^I-\bar{r}^{III}\\
\hat{c}=-\frac{\bar{r}^{III}}{\bar{r}^I-\bar{r}^{III}}
\end{cases}.
\end{equation}

\begin{rmk}
Note that according to our assumptions, regime I should provide positive trend ($\bar{r}^I\ge 0$) while regime III generates negative moves ($\bar{r}^{III}\le 0$). Therefore $\hat{\epsilon}$ and $\hat{c}$ are guaranteed to be positive. Moreover, since 
\[
0\le -\bar{r}^{III}\le \bar{r}^I-\bar{r}^{III},
\]
so $0\le \hat{c}\le 1$.
\end{rmk}

\begin{rmk}
In order to have a more effective calibration, in $\bar{r}^I$ and $\bar{r}^{III}$ estimations we can (*) take the average of only positive returns in regime I and only the negative returns in regime III. In addition, we are more interested in the longer term trend rather than the daily trend. So (**) using monthly rolling returns would help in enhancing the estimation stability. We add (*) and (**) as special data cleaning treatments in our algorithm.
\end{rmk}

Consider $\hat{\sigma}$. By observing \eqref{eqn: Regime I SDE} and \eqref{eqn:Regime III SDE} we see the volatilities in $\hat{r}^I$ and $\hat{r}^{III}$ are provided by the Brownian motion part only. Let $\hat{r}^{I\&III}:=\hat{r}^I\cup \hat{r}^{III}$. Then we can compute $\hat{\sigma}$ by
$$
\hat{\sigma}=StdDev\left({\hat{r}}^{I\&III}\right).
$$
As an alternative plan, notice that usually regime III has more volatile time-series. Therefore in order to capture a more significant volatility we choose to use $\hat{r}^{III}$ only:
\begin{equation}
\label{sigma}
\hat{\sigma}=StdDev\left({\hat{r}}^{III}\right).
\end{equation} 

Given $\hat{c}$, the last parameter $\hat{\alpha}$ is easy to compute. Based on \eqref{eqn:Regime II Drift}, we immediately have
\begin{equation}
\label{alpha}
\hat{\alpha}=-\frac{\ln\left(\hat{c}\right)}{2\hat{X}^R}.
\end{equation}

We summarise the calibration algorithm in Algorithm \ref{alg:1}. 

\begin{algorithm}
  \caption{$\set{X_t}_{t\ge 0}$ Parameter Calibration }
\label{alg:1}
\begin{enumerate}
\item Determine the time ranges for regimes I-III, and correspondingly calculate the floored log-price $\hat{X}^{I},\ \hat{X}^{II},\ \hat{X}^{III}$ by \eqref{eqn:logtrans}. Identify the equilibrium level $\hat{X}^{R}$.
\item Calculate the monthly rolling log-returns of ${\hat{r}}^I_{m}$ and ${\hat{r}}^{III}_{m}$ from $\hat{X}^{I}$ and $\hat{X}^{III}$ respectively. Calculate ${\bar{r}}^I$ and ${\bar{r}}^{III}$ via
\[
\begin{cases}
\bar{r}^I=Mean\left({\hat{r}}^I_{m}\bigg|{\hat{r}}^I_{m}\ge0\right)\times 12\\
\bar{r}^{III}=Mean\left({\hat{r}}^{III}_{m}\bigg|{\hat{r}}^{III}_{m}\le 0\right)\times 12
\end{cases},
\]
and use Equation \eqref{epsilonc} to calibrate $\hat{\epsilon},\ \hat{c}$.
\item Calculate the daily log-return time-series ${\hat{r}}^{III}_{d}$ from $\hat{X}^{III}$. Compute annualised return ${\hat{r}}^{III}$ via
\[
{\hat{r}}^{III}={\hat{r}}^{III}_{d}\times \sqrt{260},
\]
and calibrate $\hat{\sigma}$ using Equation \eqref{sigma}.
\item  Substitute $\hat{X}^R$ from step 1 and $\hat{c}$ from step 2 into equation \eqref{alpha} to calibrate $\hat{\alpha}$.
\end{enumerate}
\end{algorithm}

\begin{rmk}
We need to highlight that Algorithm \ref{alg:1} relies on two judgmental decisions, i.e. 1) the time range for different regimes and 2) the equilibrium level. During an asset price increasing period (displacement, boom, euphoria), from the economical point of view it is not difficult to differentiate those three regimes. Even when there is no clear economical signal, we can still split the time-series equally into three pieces. However, for $\hat{X}^R$, without a significant price drop, mathematically it is very challenging to decide the equilibrium level. Therefore fundamental analysis from economics may be required. The enhancement of Algorithm \ref{alg:1} will be remained in the future work.
\end{rmk}

\subsection{Numerical Examples}
We provide three numerical examples. The first two exercises are similar in nature, where based on historical data we verified the effectiveness of $\set{X_t}_{t\ge 0}$ in capturing bubble dynamics. In the third exercise we predicted drop-down probabilities for BitCoin.

\subsubsection{1997-01-02 to 2003-12-30 NASDAQ Composite Index} \label{exmp1}
The US dot-com bubble \cite{cassidy2003dotcom} could be observed from the technology-dominated NASDAQ Composite Index (US ticker symbol \^{ }IXIC). From mid 90's, \^{ }IXIC grew exponentially from below 1,000 USD to about 5,000 USD. The index hit its historical maximum in 2000-03-10, and at which date the total trading amount exceeded 10 Trillion USD (according to Yahoo Finance). After then the market collapsed rapidly and dropped back to about 1,000 USD in 2002.

In this exercise we used the adjusted daily close price of \^{ }IXIC from 1997-01-02 to 2003-12-30. The data was downloaded from Yahoo Finance. Note that, for the purpose of burst time prediction, there is no sense to calibrate the model using the full-cycle data. Therefore only a truncated time-series was used in model calibration. To be more specific, we chose the calibration regimes as follows
\begin{equation}\label{ITAX_raw_ts}
\begin{cases}
\hat{X}^{I}:\ \text{1997-01-02 ({$P_0=1,280$}) to 1997-06-26 ({$P_{t_1}=1,436$});} \\
\hat{X}^{II}:\ \text{1997-06-26 ({$P_{t_1}=1,436$}) to 1999-02-10 ({$P_{t_2}=2,309$})}; \\
\hat{X}^{III}:\ \text{1999-02-10 ({$P_{t_2}=2,309$}) to 2000-10-18 ({$P_{t_3}=3,171$}).}
\end{cases}
\end{equation}
The red curve in Figure \ref{icixCalibration} plots the full series of $\set{\hat{X}_t}_{t=0,1,...,N}$. By observation we set the equilibrium level to be $\hat{X}^R=0.67\ ({P_{R}=2,502})$. 

In order to compare our new model with existing models, we also included the OU process and drifted Brownian motion (DBM). The time-series used for calibrating these two models were the same as in $\set{X_t}_{t\ge 0}$ calibration, i.e. from 1997-01-02 to 2000-10-18. For the MLE OU calibration algorithm, cf. \cite{smith2010simulation}. We estimated the mean and volatility directly in the DBM. 1,000 paths between 1997-01-02 and 2003-12-30 were simulated by three different models. In Figure \ref{icixCalibration}, apart from the historical price of \^{ }IXIC,  we demonstrate the best path among 1,000 simulations for each model. It is clear by the graph that our new model provides better fit than existing models. To quantitatively see the closeness of different paths to the historical dynamic, the correlations for each model were calculated:
\[\set{X_t}_{t\ge 0}:\ 91.20\%,\ OU:\ 81.01\%,\ DBM:\ 72.03\%.\]
As expected, $\set{X_t}_{t\ge 0}$ provided the highest correlation while DBM was the worst among three models. To further explain our algorithm, we plot calibration regimes in Figure \ref{icixregion}. We also show 10,000 onward simulation paths for $\set{X_t}_{t\ge 0}$ with $X_0=\hat{X}_{t_3}$. From the figure we see the historical prices are fully covered by the simulation paths. This indicates our model is effective.

\begin{figure}[h]
    \centering
    \includegraphics[height = 0.475\textwidth, width=.7\textwidth]{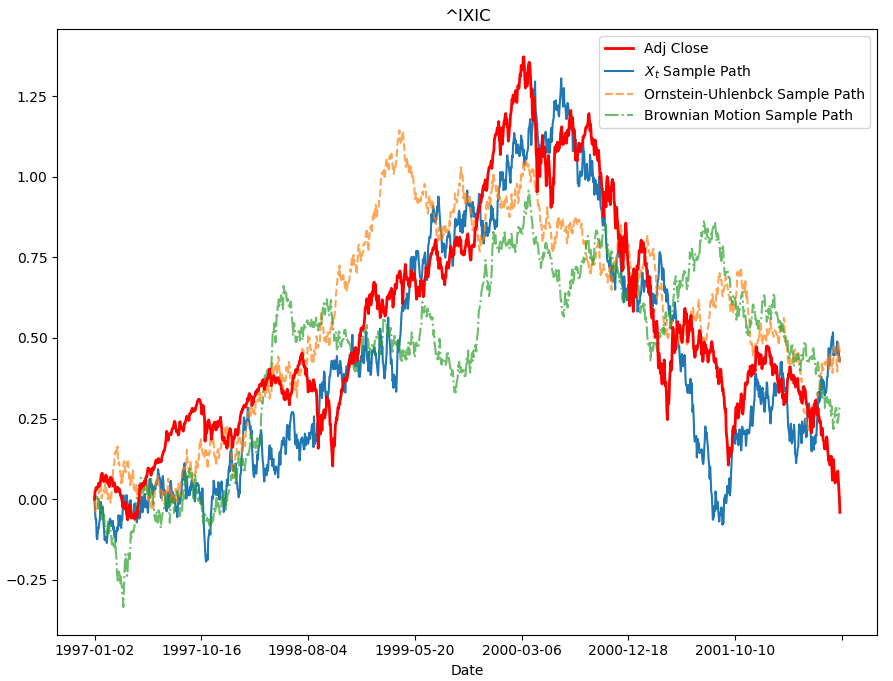}
    \caption{Model calibration comparisons for NASDAQ index (US ticker symbol \^{ }IXIC)  from 1997-01-02 to 2003-12-30. Red curve: historical adjusted log-price; blue curve: the best of 1,000 simulations from $\set{X_t}_{t\ge 0}$; orange curve: the best of 1,000 simulations from OU process; green curve: the best of 1,000 simulations from DBM. Calibration parameters, $\set{X_t}_{t\ge 0}:\ (\hat{\epsilon},\hat{\alpha},\hat{\sigma},\hat{c})=(0.39, 0.23, 0.43, 0.73);\ OU:\ (\hat{\kappa},\hat{\mu},\hat{\sigma})=(0.47, 1.09, 0.31);\ BM:\ (\hat{\mu},\hat{\sigma})=(0.25, 0.31)$. The data source is from Yahoo Finance.
       }
    \label{icixCalibration}
\end{figure}

\begin{figure}[h]
    \centering
    \includegraphics[height = 0.475\textwidth, width=.7\textwidth]{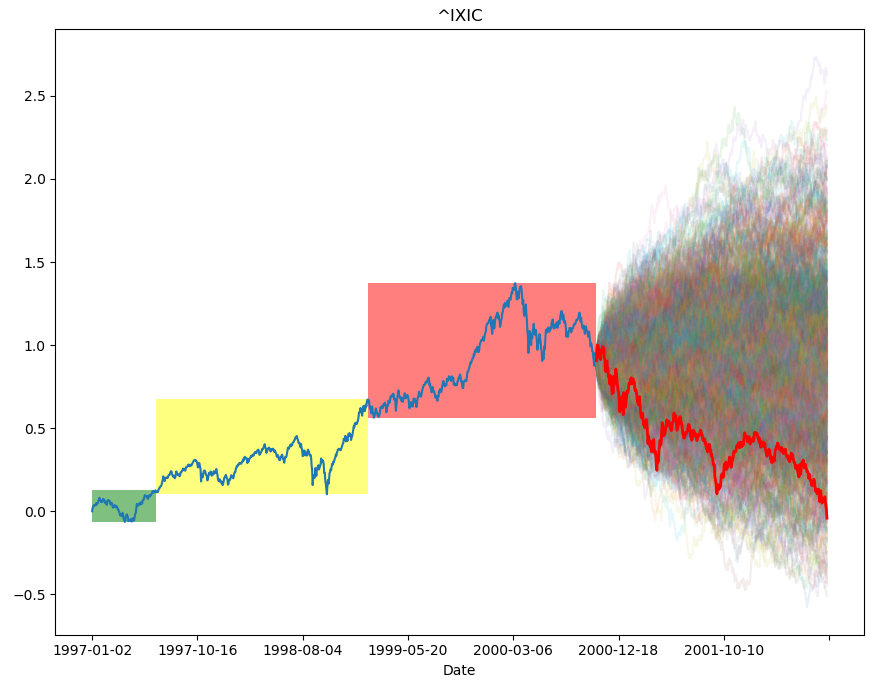}
    \caption{Algorithm \ref{alg:1} illustration based on \^{ }IXIC and 10,000 paths simulation starting from $X_0=\hat{X}_{t_3}$ . Green zone indicates regime I, the displacement stage; yellow zone indicates regime II, the boom stage; red zone indicates regime III, the euphoria \& profit taking stages. Blue curve shows the historical data used for calibration. Red curve, covered by shadowed region, shows the historical data after $t_3$. The shadowed region plots 10,000 simulation paths.}
    \label{icixregion}
\end{figure}

\clearpage

\subsubsection{2006-01-04 to 2008-12-31 Shanghai Stock Exchange Composite Index}\label{exmp2}
We use a second example to confirm our observations from Section \ref{exmp1}. The 2007 Chinese stock market crash \cite{jiang2010bubble} just happened before the 2008 global financial crisis. Starting in early 2006, the Shanghai Stock Exchange Composite Index (US ticker symbol SSEC, China ticker symbol 000001.SS) increased from about 1,000 CNY to 6,092 CNY in mid-October, 2007. And within one year's time, from October 2007 to October 2008, the price dropped below 1,800 CNY. Similar to the pattern in \^{ }IXIC, the historical log-price of SSEC dropped rapidly after the sudden peak, and before which there was a sharp increase.

The exercise settings were the same as in Section \ref{exmp1}. We only mention the regime settings and make comments where are necessary. 
\begin{equation}\label{SSEC_raw_ts}
\begin{cases}
\hat{X}^{I}:\ \text{2006-01-04 ({$P_0=1,180$}) to 2006-03-06 ({$P_{t_1}=1,288$});} \\
\hat{X}^{II}:\ \text{2006-03-06 ({$P_{t_1}=1,288$}) to 2007-05-30 ({$P_{t_2}=4,053$}); equilibrium level }\hat{X}^R=1.23\ ({P_{R}=4,040})\text{;}\\
\hat{X}^{III}:\ \text{2007-05-30 ({$P_{t_2}=4,053$}) to 2008-04-21 ({$P_{t_3}=3,116$}).}
\end{cases}
\end{equation}

Figure \ref{sse100Calibration} demonstrates comparisons between best simulation paths and historical log-price. We can immediately see that the OU process provided a much faster mean-reversion rate than which was reflected by the price dynamic. This shows the OU process cannot provide enough degree of freedom in calibrating bubble dynamics. The correlations for different models to the actual data were given by:
$$
\set{X_t}_{t\ge 0}:96.11\%,\ OU:88.00\%,\ DBM:84.42\%.
$$
Similar plot for the algorithm illustration and 10,000 simulation paths is given in Figure \ref{sse100region}. Through this exercise we further confirm that our new model is a good candidate for describing economic bubbles.

\begin{figure}[h]
    \centering
    \includegraphics[height = 0.461\textwidth, width=.7\textwidth]{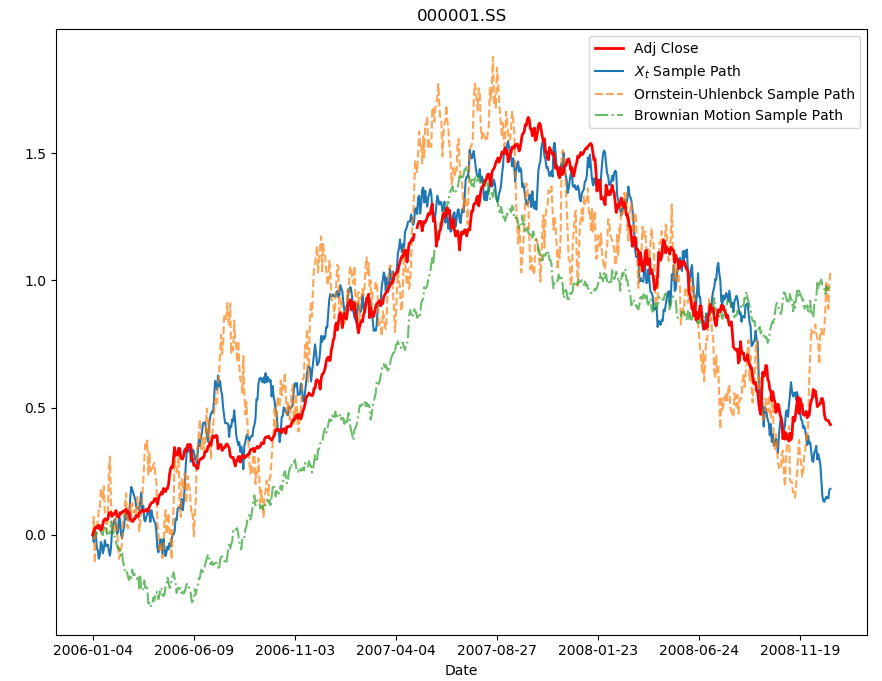}
    \caption{Model calibration comparisons for Shanghai Stock Exchange Composite index (US ticker symbol SSEC, China ticker symbol 000001.SS) from 2006-01-04 to 2008-12-31. Red curve: historical adjusted log-price; blue curve: the best of 1,000 simulations from $\set{X_t}_{t\ge 0}$; orange curve: the best of 1,000 simulations from OU process; green curve: the best of 1,000 simulations from DBM. Calibration parameters, $\set{X_t}_{t\ge 0}:\ (\hat{\epsilon},\hat{\alpha},\hat{\sigma},\hat{c})=(0.32, 0.14, 0.56, 0.70);\ OU:\ (\hat{\kappa},\hat{\mu},\hat{\sigma})=(3.30, 0.97, 1.20);\ BM:\ (\hat{\mu},\hat{\sigma})=(0.44, 0.33)$. The data source is from Yahoo Finance.
    }
    \label{sse100Calibration}
\end{figure}

\begin{figure}[h]
    \centering
    \includegraphics[height = 0.461\textwidth, width=.7\textwidth]{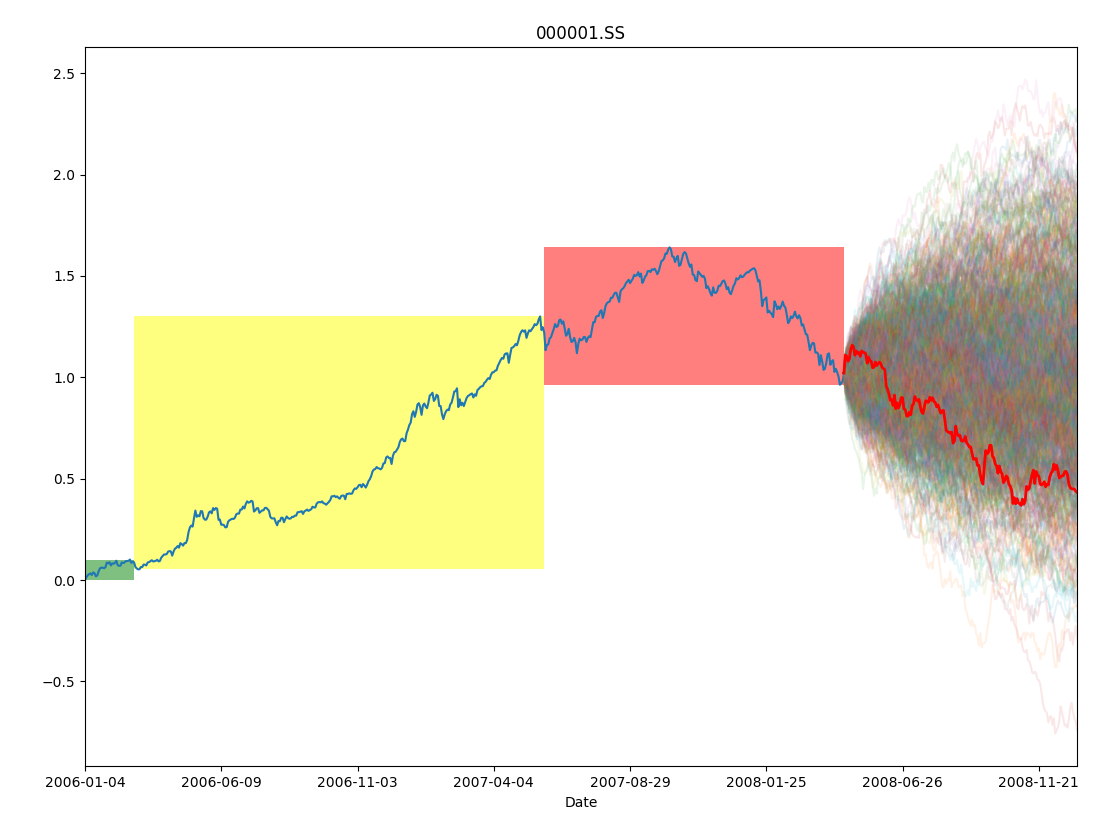}
    \caption{Algorithm \ref{alg:1} illustration based on 000001.SS and 10,000 paths simulation starting from $X_0=\hat{X}_{t_3}$. Green zone indicates regime I, the displacement stage; yellow zone indicates regime II, the boom stage; red zone indicates regime III, the euphoria \& profit taking stages. Blue curve shows the historical data used for calibration. Red curve, covered by shadowed region, shows the following historical data after $t_3$. The shadowed region plots 10,000 simulation paths.}
    \label{sse100region}
\end{figure}

\clearpage

\subsubsection{BitCoin Downward Probability Estimation}\label{exmp3}

2017 is a year of BitCoin. At the first trading day of 2017, the price of 1 BitCoin was 995.44 USD. Although spending 1,000 dollars to buy one cryptocurrency was unbelievable to people, within 1 year's time, the price hit 19,345.49 USD. Figure \ref{fig1bitcoin} illustrates patterns for the price and trading volume between 2016-01-01 and 2017-12-10. There are many potential reasons that driven the nearly 20 times increase, for example, the increasing investments from institutional investors, more open mind from lawmakers, etc.. We are interested in knowing whether the price would drop in the near future. In this exercise we conducted analysis in predicting the minimum of BitCoin price in the following month, effectively from 2017-12-10 to 2018-01-12.

\begin{figure}[h]
 \centering
        \includegraphics[width=.75\textwidth, height = .5\textwidth]{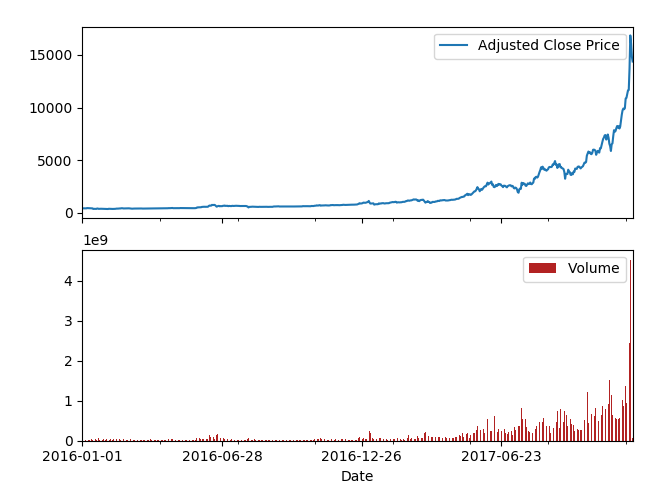}
        \caption{Bitcoin historical daily price and trading volume from 2016-01-01 to 2017-12-10. The data source is from Yahoo Finance.}
        \label{fig1bitcoin}
 \end{figure}
 
The model calibration was based on time-series between 2016-01-01 and 2017-12-10:
\begin{equation}\label{BitCoin_raw_ts}
\begin{cases}
\hat{X}^{I}:\ \text{2016-01-01 ({$P_0=433$}) to 2016-05-30 ({$P_{t_1}=528$});} \\
\hat{X}^{II}:\ \text{2016-05-30 ({$P_{t_1}=528$}) to 2017-08-13 ({$P_{t_2}=4,327$}); equilibrium level }\hat{X}^R=2.30\ ({P_{R}=4,327})\text{;}\\
\hat{X}^{III}:\ \text{2017-08-13 ({$P_{t_2}=4,327$}) to 2017-12-10 ({$P_{t_3}=14,371$}).}
\end{cases}
\end{equation}
Without mentioning too much detail, we summarise Algorithm \ref{alg:1} outputs in below
\begin{equation}\label{eqn:extendedpara}
\hat{\epsilon}=0.51;\ \hat{\alpha}=0.08; \hat{\sigma}=0.91; \hat{c}=0.69.
\end{equation}

The prediction was made on 2017-12-10 with the price at $P_{t_3}=14,371$\footnote{Note that, the data in our record does not correspond to the close price of 2017-12-10. In fact, the data was downloaded when the market was still under trading. }. We considered $0\%$ to $60\%$ drops from $P_{t_3}$. To convert the drop percentages from $\set{P_t}_{t=0,1,...,N}$ to the log-price space, $\set{X_t}_{t=0,1,...,N}$, we calculated the hitting level $a$ via \eqref{eqn:logtrans}. Referring to Section \ref{sec51}, we transferred parameters in \eqref{eqn:extendedpara}, together with $a$, from the extended SDE \eqref{eqn:eqxexpestended} to the parameters in the standard SDE \eqref{eqnexp}. By Corollary \ref{cor:runningMinimum}, in the end we were able to have the probability distribution for the minimum price within one month time. On the other hand, an error evaluation on the perturbed FPTD should be given. Refer to Proposition \ref{prop:411}. The relative error is given by
\[
e(t):=\left|\frac{P\brck{\tau^*\in dt}}{P\brck{\tau^*\in dt}+q_\tau(t)}\right|.
\]
Using the probabilistic representation, we estimated $q_\tau(t)$ via 10,000 paths simulation. It should be noticed that, the relative error generally is high at tails as the actual density converges to $0$. Therefore it is not necessary to compute the relative error at each point. In fact, we are more concerned that whether the peak of the distribution would be changed by perturbations. So only relative errors on the density peak were computed. Table \ref{table1} summarises the results.

\begin{table}[h]
\centering
\begin{tabular}{|c|c|c|c|c}
\hline
  Percentage of Drop  &   Price $P_l$ (USD)&  Probability $\mathbb{P}\brck{P_t^*\le P_l}$ &  Peak Relative Error \\
  \hline
  \hline
    0\% &  14,371.62 &     100.00\% &             0.00\% \\\hline
    5\% &  13,653.05 &     84.85\% &            4.97\% \\\hline
    10\% &  12,934.47 &    69.38\% &           1.48\% \\\hline
    15\% &  12,215.89 &     54.25\% &          0.90\% \\\hline
    20\% &  11,497.30 &     40.19\% &          0.04\% \\\hline
    25\% &  10,778.72 &     27.88\% &          0.59\% \\\hline
    30\% &  10,060.14 &     17.87\% &          0.86\% \\\hline
    35\% &   9,341.56 &     10.38\% &           0.88\% \\\hline
    40\% &   8,622.98 &     5.35\% &             0.68\% \\\hline
    45\% &   7,904.40 &     2.37\% &             1.69\% \\\hline
    50\% &   7,185.81 &     0.86\% &             1.45\% \\\hline
    55\% &   6,467.23 &     0.25\% &             2.40\%\\\hline
    60\% &   5,748.65 &     0.05\% &             1.78\% \\\hline
\end{tabular}
\caption{BitCoin downward price prediction between 2017-12-10 and 2018-01-12. Columns 1-4 correspond to the percentages of price drop, dropped price $P_l$, probability of the lowest price $P_t^*$ below $P_l$, and the relative error in density peaks.}
\label{table1}
\end{table}

First by checking the last column (relative errors), we see in general the perturbation model is accurate. The largest error was in the 5\% drop. In this case the hitting level is very close to the initial price $P_{t_3}$. As a result, the density curve will shrink to the y-axis. Therefore a larger error is expected. Analogously, large errors might also exist in the case that hitting levels are far to the initial price. By ruling out the extreme drops, in the range of 10\% to 50\%, we see the estimation errors remained below 2\%. 

We now consider the possibility of market collapse. Referring to the scenarios in \^{ }IXIC and 000001.SS, we found their largest drops in a month were about 30\%, and which happened in the spring of 2000 and autumn of 2008, respectively. Then check the probability of 30\% drop for BitCoin. From Table \ref{table1} we only see about 17.87\%. In fact, even for a 20\% drop, the probability was about 40.19\%. This means there was more than half chance that the price would remain above 11,497.30. Therefore we concluded that the market was unlikely to collapse in the next month. 

We collected the one month data from 2017-12-10 to 2018-01-12 and plot the time-series in Figure \ref{fig4bitcoin}. From the graph we see the lowest close price was 12,531.52 on 2017-12-30. This verified our conclusion that the market would not collapse immediately. On the other hand, compare the probabilities in Table \ref{table1} with the thresholds in Figure \ref{fig4bitcoin}. The price on 2017-12-30 broke the 10\% drop threshold, where the probability given by our prediction was 69.38\%. This further confirmed that the model is effective.

\begin{figure}[h]
\centering
        \includegraphics[width=.7\textwidth, height = .5\textwidth]{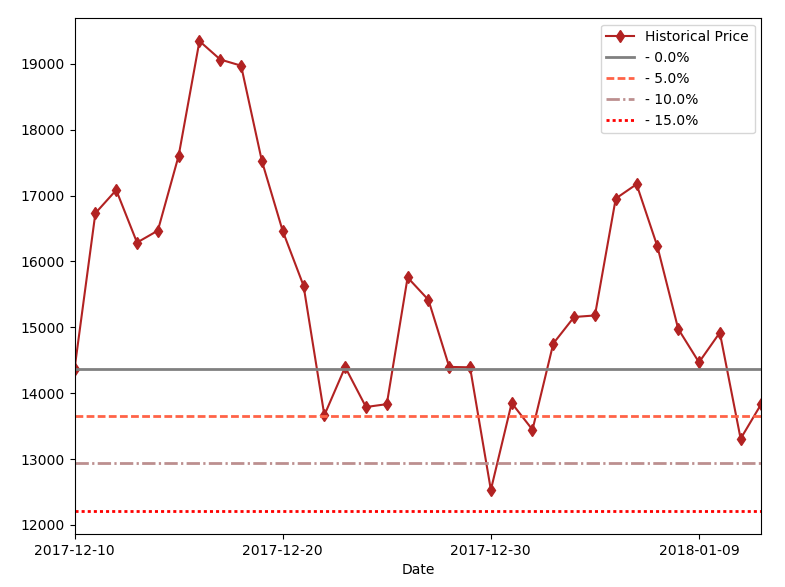}
        \caption{BitCoin close price between 2017-12-10 and 2018-01-12. The probabilities of different thresholds are reported in Table \ref{table1}.}
        \label{fig4bitcoin}
 \end{figure}

\section{Conclusion}
In this paper we find a new diffusion process which can be used in modelling economic bubbles. The simple form of the model enables us to deduce its downward FPTD explicitly. Therefore the paper provides a useful tool in estimating the burst time of an economic bubble. Numerical examples in Section 5 consistently confirm that the model and its prediction are effective. Results in Section 3 show the process has desirable properties which potentially can be employed in the future option pricing work. The perturbation technique, as introduced in Section 4, can be extended in finding explicit FPTDs of other diffusion processes. In another working paper of us, the corresponding closed-form FPTDs have been found for the OU and Bessel processes. One remaining issue is the exact simulation for the process. This requires further understandings to the $\theta\brck{r,s}$ function and we leave it for the future work.

\clearpage

\bibliographystyle{plain}
\bibliography{reff.bib}

\end{document}